\def\amsbb{\use@mathgroup \M@U \symAMSb} \makeatother
\def\amsbb{\use@mathgroup \M@U \symAMSb} \makeatother
\newcommand{\scrsmall}{\scriptscriptstyle }
\def\bphi{\mbox{\boldmath $\phi$}}
\def\bpsi{\mbox{\boldmath $\psi$}}
\def\diag{\mbox{\rm{diag}}}
\def\col{\mbox{\rm{col}}}
\def\msd{{\rm{msd}}}
\def\emse{{\rm{emse}}}
\def\vec{{\rm{vec}}}
\def\Ind{{\rm{Ind}}}
\def\net{{\rm{net}}}
\def\Tr{{\rm{Tr}}}
\def\E{{\amsbb{E}}}
\def\u{\boldsymbol{u}}
\def\n{\boldsymbol{n}}
\def\v{\boldsymbol{v}}
\def\d{\boldsymbol{d}}
\def\z{\boldsymbol{z}}
\def\w{\boldsymbol{w}}
\def\h{\boldsymbol{h}}
\def\p{\boldsymbol{p}}
\def\be{\begin{equation}}
\def\ee{\end{equation}}
\def\ba{\begin{align}}
\def\ea{\end{align}}
\begin{document}
\title{Estimation of Space-Time Varying Parameters Using a Diffusion LMS Algorithm}
\author{Reza~Abdolee$^*$,~\IEEEmembership{Student Member,~IEEE,}
        Benoit~Champagne,~\IEEEmembership{Senior Member,~IEEE,}
        and~ \mbox{Ali~H. Sayed},~\IEEEmembership{Fellow,~IEEE}%
\thanks{Copyright (c) 2013 IEEE. Personal use of this material is permitted. However, permission to use this material for any other purposes must be obtained from the IEEE by sending a request to pubs-permissions@ieee.org. }
\thanks{R. Abdolee and B. Champagne are with the Department
of Electrical and Computer Engineering, McGill University, Montreal,
QC, H3A 2A7 Canada (e-mail: reza.abdolee@mail.mcgill.ca, benoit.champagne@mcgill.ca).}%
\thanks{A. H. Sayed  is with the Department of Electrical Engineering, University
of California, Los Angeles, CA 90095 USA (e-mail: sayed@ee.ucla.edu).}
\thanks{The work of R. Abdolee and B. Champagne was supported by the Natural Sciences and Engineering Research Council (NSERC) of Canada. The work of A. H. Sayed was supported in part by NSF grant CCF-1011918.}
}
\maketitle
\begin{abstract}
We study the problem of distributed adaptive estimation over networks where nodes cooperate to estimate physical parameters that can vary over both \emph{space} and \emph{time} domains. We use a set of basis functions to characterize the space-varying nature of the parameters and propose a diffusion least mean-squares (LMS) strategy to recover these parameters from successive time measurements. We analyze the stability and convergence of the proposed algorithm, and derive closed-form expressions to predict its learning behavior and steady-state performance in terms of mean-square error. We find that in the estimation of the space-varying parameters using distributed approaches, the covariance matrix of the regression data at each node becomes rank-deficient. Our analysis reveals that the proposed algorithm can overcome this difficulty to a large extent by benefiting from the network  stochastic matrices that are used to combine exchanged information between nodes. We provide computer experiments to illustrate and support the theoretical findings.
\end{abstract}
\begin{IEEEkeywords}
Diffusion adaptation, distributed processing, parameter estimation, space-varying parameters, sensor networks, interpolation.
\end{IEEEkeywords}
\IEEEpeerreviewmaketitle
\section{introduction}
\IEEEPARstart{I}{N} previous studies on diffusion algorithms for adaptation over networks, including least-mean-squares (LMS) or recursive least squares (RLS) types, the parameters being estimated are often assumed to be \emph{space-invariant} \cite{cattivelli2008diffusion,cattivelli2010diffusion,chen2012diffusion,chouvardas2011adaptive,tu2012diffusion,sayed2012diffusion}. In other words, all agents are assumed to sense and measure data that arise from an underlying physical model that is represented by fixed parameters over the spatial domain. Some  studies considered particular applications of diffusion strategies to data that arise from potentially different  models \cite{tu2012adaptive,di2012decentralized}. However, the proposed techniques in these works are not  immediately applicable to scenarios where the estimation parameters vary over space across the network. This situation is encountered in many applications, including the monitoring of fluid flow in underground porous media \cite{lee1987estimation}, the tracking of population dispersal in ecology \cite{holmes1994partial}, the localization of distributed sources in dynamic systems \cite{alpay2000model}, and the modeling of diffusion phenomena in inhomogeneous media \cite{van1988diffusion}. In these applications, the space-varying parameters being estimated usually result from discretization of the coefficients of an underlying partial differential equation through spatial sampling.

The estimation of spatially-varying parameters has been addressed in several previous studies, including \cite{chung1988identification,richter1981numerical,isakov2000identification,demetriou2007process, demetriou2009estimation}. In these works and other similar references on the topic, the solutions typically rely on the use of a central processing (fusion) unit and less attention is paid to distributed and in-network processing solutions. Distributed algorithms are useful in large networks when there is no powerful fusion center and when the energy and communication resources of individual nodes are limited. Many different classes of distributed algorithms for parameter estimation over networks have been proposed in the recent literature, including incremental method\cite{bertsekas1997new,Nedia2001Incremental,rabbat2005quantized,lopes2007incremental,li2010distributed}, consensus methods \cite{tsitsiklis1986distributed,xiao2006space,stankovic2007decentralized,braca2008running,sardellitti2010fast,aysal2009broadcast,boyd2006randomized,dimakis2010gossip,srivastava2011distributed,kar2011convergence,di2011bio,hu2010adaptive},
and diffusion methods\cite{lopes2008diffusion,cattivelli2010diffusion,chen2012diffusion,sayed2012diffusion,sayed2013DiffusionMagazine,zhao2012performance}. Incremental techniques require to set-up a cyclic path between nodes over the network and are therefore sensitive to link failures. Consensus techniques require doubly-stochastic combination policies and can cause network instability in applications involving continuous adaptation and tracking \cite{tu2012diffusion}. In comparison, diffusion strategies demonstrate a stable behavior over networks regardless of the topology and endow networks with real-time adaptation and learning abilities \cite{sayed2012diffusion,sayed2013DiffusionMagazine,tu2012diffusion}.

Motivated by these considerations, in this paper, we develop a distributed LMS algorithm of the diffusion type to enable the estimation and tracking of parameters that may vary over  both \emph{space} and \emph{time}. Our approach starts by introducing a linear regression model to characterize space-time varying phenomena over networks. This model is derived by discretizing a representative second-order partial differential equation (PDE), which can be useful in characterizing many dynamic systems with spatially-varying properties.
We then introduce a set of basis functions, e.g., shifted Chebyshev polynomials, to represent the space-varying parameters of the underlying phenomena in terms of a finite set of space-invariant expansion coefficients. Building on this representation, we develop a diffusion LMS strategy that cooperatively  estimates, interpolates, and tracks the model parameters over the network. We analyze the convergence and stability of the developed algorithm, and derive closed-form expressions to characterize the learning and convergence behavior of the nodes in mean-square-error sense. It turns out that in the context of space-time varying models, the covariance matrices of the regression data at the various nodes can become rank deficient. This property influences the learning behavior of the network and causes the estimates to become biased. We elaborate on how the judicious use of stochastic combination matrices can help alleviate this difficulty.

The paper is organized as follows. In Section \ref{sec.:SpaceDependentLinearRegression}, we introduce a space-varying linear regression model which is motivated from a physical phenomenon characterized by a PDE, and formulate an optimization problem to find the unknown parameters of the introduced model. In Section \ref{sec.:AlgorithmDevelopment}, we derive a diffusion LMS algorithm that solves this problem in a distributed and adaptive manner. We analyze the performance of the algorithm in Section \ref{sec.:performance_analysis}, and present the numerical results of computer simulations in Section \ref{sec.:results}. The concluding remarks appear in \mbox{Section \ref{sec.:conclusion}}.

\textit{Notation:}  Matrices are represented by upper-case and vectors by lower-case letters. Boldface fonts are reserved for random variables and normal fonts are used for deterministic quantities. Superscript $(\cdot)^T$ denotes transposition for real-valued vectors and matrices and $(\cdot)^{\ast}$  denotes conjugate transposition for complex-valued vectors and matrices. The symbol $\E[\cdot]$ is the expectation operator, $\text{Tr}(\cdot)$ represents the trace of its matrix argument and diag$\{\cdot\}$ extracts the diagonal entries of a matrix, or constructs a diagonal matrix from a vector. $I_M$ represents the identity matrix of size $M\times M$ (subscript $M$ is omitted when the size can be understood from the context). The vec$(\cdot)$ operator vectorizes a matrix by stacking its columns on top of each other. A set of vectors are stacked into a column vector by $\col\{\cdot\}$.
\section{Modeling and Problem Formulation}
\label{sec.:SpaceDependentLinearRegression}
In this section, we motivate a linear regression model that can be used to describe dynamic systems with spatially varying properties. We derive the model from a representative second-order one-dimensional PDE that is used to characterize the evolution of the pressure distribution in inhomogeneous media and features a diffusion coefficient and an input source, both of which vary over space. Extension and generalization of the proposed approach, in modeling space-varying phenomena, to  PDEs of higher order or defined over two-dimensional space are generally straightforward (see, e.g., Section \ref{subsec.:Diffusion LMS for Process Estimation}).

The PDE we consider is expressed as \cite{van1988diffusion,mattheij2005partial}:
\begin{align}
\frac{\partial f(x,t)}{\partial t}=\frac{\partial}{\partial x}\Bigg ( \theta(x)\frac{\partial f(x,t)}{\partial x}  \Bigg)+q(x,t)
\label{eq.:one_dimensional_pde1}
\end{align}
where $(x,t)\in [0,L]\times [0,T]$ denote the space and time variables with upper limits $L \in {\amsbb R}^+$ and $T \in {\amsbb R}^+$, respectively,  $f(x,t):{\amsbb R}^2 \rightarrow {\amsbb R}$, represents the system distribution (e.g., pressure or temperature) under study, $\theta(x){\text :}\, {\amsbb R} \rightarrow {\amsbb R}$ is the space-varying diffusion coefficient and $q(x,t){\text :}\, {\amsbb R}^2 \rightarrow {\amsbb R}$  is the input distribution that includes sources and sinks.  PDE (\ref{eq.:one_dimensional_pde1}) is assumed to satisfy the Dirichlet boundary conditions\footnote{Generalization of the boundary conditions to nonzero values is possible as well.}, $f(0,t)=f(L,t) =0$ for all $t \in [0,T]$. The distribution of the system at $t=0$ is given by $f(x,0) =y(x)\; \text{for} \; x\in[0,\,L]$. It is convenient to rewrite (\ref{eq.:one_dimensional_pde1}) as:
\begin{align}
\frac{\partial f(x,t)}{\partial t}= \theta(x)\frac{{\partial}^2 f(x,t)}{{\partial x}^2}+\frac{\partial \theta(x)}{\partial x}\frac{\partial f(x,t)}{\partial x}+q(x,t)
\label{eq.:one_dimensional_pde2}
\end{align}
and employ the finite difference method (FDM) to discretize the PDE over the time and space domains \cite{thomas1995numerical}. For $N$ and $P$ given positive integers, let $\Delta x=L/(N+1)$  and
$x_k=k \Delta x$ for $k \in \{0,1,2,\ldots,N+1\}$, and similarly, let $\Delta t=T/P$ and $t_i=i\Delta t$ for $i \in \{0,1,2, \ldots, P\}$. We further introduce the sampled values of the pressure distribution $ f_k(i)\triangleq f(x_k ,t_i)$, input $q_k(i)\triangleq q(x_k,t_i)$, and  space-varying coefficient $\theta_k\triangleq\theta(x_k)$. It can be verified that applying FDM  to (\ref{eq.:one_dimensional_pde2}), yields the following recursion:
\begin{equation}
 f_k(i)={u}_{k,i}h^o_k+\Delta t \, q_k(i-1), \quad k \in \{1,2,\ldots,N\}
\label{eq.:fieldRecursion}
\end{equation}
where the vectors $h^o_k \in {\amsbb R}^{3\times 1}$ and ${u}_{k,i} \in {\amsbb R}^{1\times 3}$  are defined as
\begin{align}
&{h}^o_k\triangleq[h^o_{1,k},h^o_{2,k},h^o_{3,k}]^T \label{eq.:local_vectors_components}\\
&{u}_{k,i}\triangleq[f_{k-1}(i-1),\,  f_k(i-1), \, f_{k+1}(i-1)]
\end{align}
the entries $h^o_{m,k} \in {\amsbb R}$ are:
\begin{align}
h^o_{1,k}& = \frac{\nu}{4}(\theta_{k-1}+4\theta_k-\theta_{k+1})  \label{eq.:local_vectors_relation_with_theta1} \\
h^o_{2,k}&=1-2\nu \, \theta_k
\label{eq.:local_vectors_relation_with_theta2}\\
h^o_{3,k}& =\frac{\nu}{4}(-\theta_{k-1}+4\theta_k+\theta_{k+1}) \label{eq.:local_vectors_relation_with_theta3}
\end{align}
and $\nu={\Delta t}/{\Delta x^2}$. Note that relation (\ref{eq.:fieldRecursion}) is defined for $k \in \{1,2,\cdots,N\}$, i.e., no data sampling is required to be taken at $x=\{0,L\}$ because $f_0(i)$ and $f_{N+1}(i)$ respectively correspond to the known boundary conditions $f(0,t)$ and $f(L,t)$.
For monitoring purposes (e.g., estimation of $\theta(x)$), sensor nodes collect noisy measurement samples of $f(x,t)$ across the network. We denote these scalar measurement samples by
\begin{equation}
 \z_k(i)=  f_k(i)+ \n_k(i)
\label{eq.:NoisyMeasurement}
\end{equation}
where $\n_k(i) \in {\amsbb R}$ is random noise term.
Substituting (\ref{eq.:fieldRecursion}) into (\ref{eq.:NoisyMeasurement}) leads to
\be
\d_k(i)=u_{k,i}h^o_k+ \n_k(i)
\label{eq.:state_dependent_regression0}
\ee
where
\be
\d_k(i)\triangleq \z_k(i)-\Delta t \,q_k(i-1)
\ee
The space-dependent model (\ref{eq.:state_dependent_regression0}) can be generalized to accommodate higher order PDE's, or to describe systems with more than one spatial dimension. In the generalized form, we assume that $u_{k,i}$ is random due to the possibility of sampling errors, and therefore represent it using boldface notation $\u_{k,i}$. We also let $h_k^o$ and $\u_{k,i}$ be $M$-dimensional vectors. In addition, we denote the noise more generally by the symbol
$\v_k(i)$ to account for different sources of errors, including
the measurement noise shown in (\ref{eq.:NoisyMeasurement}) and modeling errors. Considering this generalization, the space-varying regression model that we shall consider is of the form:
\begin{equation}
 \d_k(i)=\u_{k,i}h^o_k+ \v_k(i)
\label{eq.:state_dependent_regression}
\end{equation}
where $\d_k(i) \in {\amsbb R}, \u_{k,i} \in {\amsbb R}^{1 \times M}, h^o_k \in {\amsbb R}^{M \times 1}$ and $\v_k(i) \in {\amsbb R}$.
 In this work, we study networks that monitor phenomena characterized by regression models of the form
(\ref{eq.:state_dependent_regression}), where the objective is to estimate the space-varying parameter vectors $h_k^o$ for  $k \in \{1,2,\cdots,N\}$. In particular, we seek a distributed solution in the form of an adaptive algorithm with a diffusion mode of cooperation to enable the nodes to estimate and track these parameters over both space and time. The available information for estimation of the $\{h^o_k\}$ are the measurement samples, $\{\d_k(i), \u_{k,i}\}$, collected at the $N$ spatial position $x_k$, which we take to represent  $N$ nodes.

Several studies, e.g., \cite{chung1988identification,richter1981numerical,isakov2000identification}, solved space-varying parameter estimation problems using {\em \mbox{non-adaptive} centralized} techniques.
In centralized optimization, the space-varying parameters $\{h^o_k\}$ are found by minimizing the following global cost function over the variables $\{h_k\}$:
\vspace{-0.2cm}
\begin{equation}
J(h_1,\ldots,h_N)\triangleq \sum_{k=1}^N J_k(h_k)
\label{eq.:DecomposedCostFunction}
\end{equation}
where
\begin{equation}
J_k(h_k) \triangleq \E| \d_k(i)-\u_{k,i}h_k|^2
\label{eq.:localCostFunction}
\end{equation}
To find $h^o_k$ using distributed mechanisms, however, preliminary steps are required to transform the global cost (\ref{eq.:DecomposedCostFunction}) into a suitable form convenient for decentralized optimization \cite{cattivelli2010diffusion}. Observe from (\ref{eq.:local_vectors_relation_with_theta1})-(\ref{eq.:local_vectors_relation_with_theta3}) that collaborative processing is beneficial in this case because the $h_k^o$ of neighboring nodes are related to each other through the space-dependent function $\theta(x)$.
\theoremstyle{remark}
\newtheorem{remark}{Remark}
\begin{remark}
Note that if nodes individually estimate their own space-varying parameters by minimizing $J_k(h_k)$, then at each time instant, they will need to transmit their estimates to a fusion center for interpolation, in order to determine the value of the model parameters over regions of space where no measurements were collected. Using the proposed distributed algorithm in Section \ref{subsec.:diffusion_strategy}, it will be possible to update the estimates and interpolate the results in a fully distributed manner. Cooperation also helps the nodes refine their estimates and perform more accurate interpolation. \hfill $\blacksquare$
\end{remark}
\section{Adaptive Distributed Optimization}
\label{sec.:AlgorithmDevelopment}
In distributed optimization over networked systems, nodes achieve their common objective through collaboration. Such an objective may be defined as finding a global parameter vector that minimizes a given cost function that encompasses the entire set of nodes. For the problem stated in this study, the unknown parameters in (\ref{eq.:DecomposedCostFunction}) are node-dependent. However, as we explained in Section \ref{sec.:SpaceDependentLinearRegression}, these space-varying parameters are related through a well-defined function, e.g., $\theta(x)$ over the spatial domain. In the continuous space domain, the entries of each $h^o_k$, i.e., $\{h^o_{1,k},\cdots,h^o_{M,k}\}$ can be interpreted as samples of $M$ unknown space-varying parameter functions $\{h^o_1(x),\cdots,h^o_M(x)\}$ at location $x=x_k$, as illustrated in \mbox{Fig. \ref{fig.:network-topology-and-parameters}}.
\begin{figure}[!h]
\centering
\includegraphics[width=7.5 cm,height=5cm]{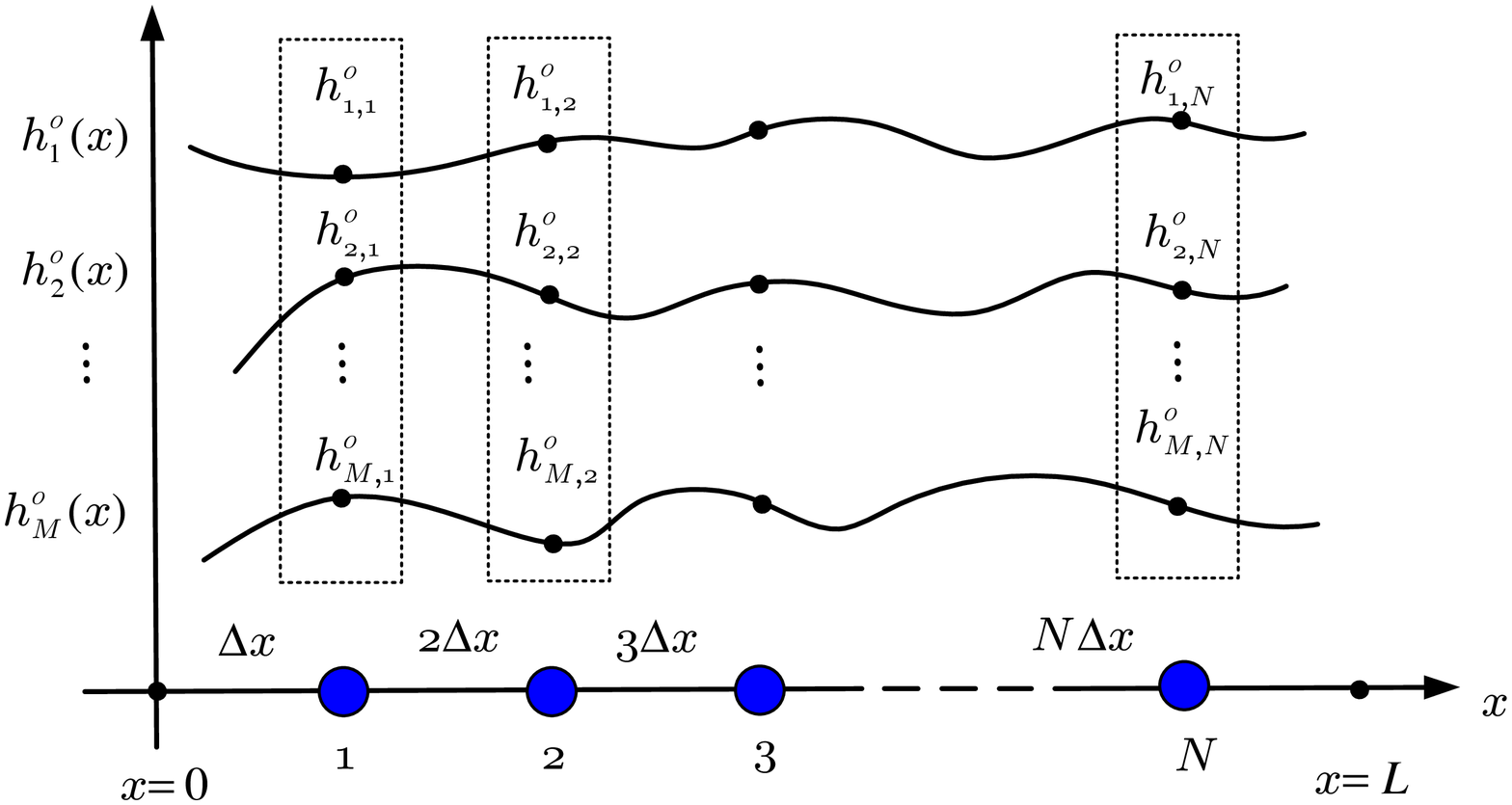}
\caption{\small {An example of the space-varying parameter estimation problem over a one-dimensional network topology. The larger circles on the $x$-axis represent the node locations at \mbox{$x=x_k$}. These nodes collect samples $\{\d_k(i),\u_{k,i}\}$ to estimate the space-varying parameters $\{h^o_k\}$}. For simplicity in defining the vectors $b_k$ in (\ref{eq.:b_k-entries}), for this example, we assume that the node positions $x_k$  are uniformly spaced, however, generalization to non-uniform spacing is straightforward.}
\label{fig.:network-topology-and-parameters}
\end{figure}

We can now use the well-established theory of interpolation  to find a set of linear expansion coefficients, common to all the nodes, in order to estimate  space-varying parameters using distributed optimization. Specifically, we assume that the \mbox{$m$-th} unknown space-varying parameter function, $h^o_m(x)$ can be expressed as a unique linear combination of some $N_b$ space basis functions, i.e.,
\begin{equation}
h^o_m(x)=W_{m,1}b_1(x)+W_{m,2}b_2(x)+\cdots+W_{m,N_b}b_{N_b}(x)
\label{eq.:countinoush(x)}
\end{equation}
where $\{W_{m,n}\}$ are the unique expansion coefficients and $\{b_{n}(x)\}$ are the basis functions. In the application examples treated in Section \ref{sec.:results}, we adopt shifted Chebyshev polynomials as basis functions,
which are generated using the following expressions \cite{mason2003chebyshev}
\begin{align}
&b_{1}(x)=1, \qquad b_{2}(x)=2x-1  \\
&b_{n+1}(x)=2(2x-1) b_{n}(x)-b_{n-1}(x),\quad 2<n<N_b
\label{eq.:ShiftedCountinuesChebyshev}
\end{align}
The choice of a suitable set of basis functions $\{b_n(x)\}_{n=1}^{N_b}$ is application-specific and guided by multiple considerations such as representation efficiency, low computational complexity, interpolation capability, and other desirable properties, such as orthogonality. Chebyshev basis functions yield good results in terms of the above criteria and helps avoid the Runge's phenomenon at the endpoints of the space interval \cite{mason2003chebyshev}.

The sampled version of the \mbox{$m$-th} space-varying parameter $h^o_m(x)$ in (\ref{eq.:countinoush(x)}),  at $x=x_k=k\Delta x$, can be written as:
\begin{equation}
h^o_{m,k}=W_m^T b_k
\label{eq.:m-th-entryOfSpaceVarPara}
\end{equation}
where
\vspace{-0.2cm}
\begin{align}
W_m&\triangleq [W_{m,1},W_{m,2},\cdots,W_{m,N_b}]^T \\
b_{k}&\triangleq [b_{1,k},\cdots,b_{N_b,k}]^T
\label{eq.:b_k-entries}
\end{align}
and each entry $b_{n,k}$ is obtained by sampling the corresponding basis function at the same location, i.e.,
\be
b_{n,k} \triangleq  b_n(x_k) =b_n(k\Delta x)
\label{eq.:b-nk}
\ee
Collecting the sampled version of all $M$ functions $ h_m^o(x)$ for $m\in \{1,\cdots,M\}$  into a column vector as
\be
h^o_k=[h^o_{1,k},h^o_{2,k},\cdots, h^o_{M,k}]^T
\ee
and using (\ref{eq.:m-th-entryOfSpaceVarPara}), we arrive at:
\begin{equation}
h^o_k=W^o b_k
\label{eq.:B-spline function}
\end{equation}
where
\small
\begin{equation}
W^o\triangleq \left [ \begin{array}{cccc}
W^o_{1,1} & W^o_{1,2} & \ldots & W^o_{1,N_b}\\
W^o_{2,1} & W^o_{2,2} & \ldots & W^o_{2,N_b}\\
\vdots & \vdots & \ldots & \vdots\\
W^o_{M,1} & W^o_{M,2} & \ldots & W^o_{M,N_b}
\end{array}\right]
\end{equation}
\normalsize
\theoremstyle{remark}
\begin{remark}
Several other interpolation techniques can be used to obtain the basis functions $b_n(x)$, such as the so-called kriging method \cite{isaaks1989introduction}. The latter is a data-based weighting approach, rather than a distance-based interpolation. It is applicable in scenarios where the unknown random field to be interpolated, in our case $h^o_k$, is wide-sense stationary; accordingly, it requires knowledge about the means and covariances of the random field over space, as employed in \cite{kim2011cooperative}. If these covariances are not available, then the variogram models, describing the degree of spatial dependence of the random field, are used to generate substitutes for these covariances \cite{cressie1993statistics}. However, {\it a-priori} knowledge about the parameters of variogram models, including nugget, sill, and range, are required to obtain the spatial covariances. In this work, since neither the means and covariances nor the parameters of the variogram models of the random fields are available, we focus on interpolation techniques that rely on distance information rather than the statistics of the random field to be interpolated.
\hfill $\blacksquare$
\end{remark}
Returning to equation (\ref{eq.:B-spline function}), it is convenient to rearrange $W^o$ into an $MN_b\times 1$ column vector $w^o$ by stacking up the columns of ${W^o}^T$, i.e., $w^o=\vec({W^o}^T)$, and defining the block diagonal matrix $B_k \in {\amsbb R}^{M\times MN_b}$ as:
\begin{equation}
B_k\triangleq  I_M \otimes b^T_k
\label{eq.:B_k}
\end{equation}
Then, relation (\ref{eq.:B-spline function}) can be rewritten in terms of the unique parameter vector $w^o$ as:
\begin{equation}
h^o_k=B_k w^o
\label{eq.:local_global_relation}
\end{equation}
so that substituting $h^o_k$ from (\ref{eq.:local_global_relation}) into (\ref{eq.:state_dependent_regression}) yields:
\begin{align}
&\d_k(i)= \u_{k,i}B_k w^o+ \v_k(i) \label{eq.:measurement_linear_model2}
\end{align}
Subsequently, the global cost function (\ref{eq.:DecomposedCostFunction}) becomes:
\begin{align}
&J(w)=\sum_{k=1}^N \E|\d_k(i)- \u_{k,i}B_k w|^2  \label{eq.:LocalObjectiveFunction}
\end{align}
In the following, we elaborate on how the parameter vector $w^o$ and, hence, the $\{h_{k}^o\}$ can be estimated from the data $\{\d_k(i),\u_{k,i}\}$ using centralized and distributed adaptive optimization.
\subsection{Centralized Adaptive Solution}
\label{sebsec.:CentralizedSolution}
We begin by stating the assumed statistical conditions on the data over the network.
\newtheorem{assm}{Assumption}
\begin{assm}
\label{assm.:regressor assumption}
We assume that  $\{\d_k(i), \u_{k,i}, \v_{k}(i)\}$ in model (\ref{eq.:measurement_linear_model2}) satisfy the following conditions:
\begin{enumerate}
\item $\d_k(i)$ and $ \u_{k,i}$ are zero-mean, jointly wide-sense stationary random processes with second-order moments:
\begin{align}
r_{du,k}&=\E[\d_k(i)\u_{k,i}^T] \in {\amsbb R}^{M \times 1}\\
R_{u,k}&=\E[\u_{k,i}^T\u_{k,i}] \in {\amsbb R}^{M \times M}
\end{align}
\item The regression data $\{ \u_{k,i}\}$  are i.i.d. over time, independent over space, and their covariance matrices, $R_{u,k}$, are positive definite for all $k$.
\item The noise processes $\{\v_k(i)\}$ are zero-mean, i.i.d. over time, and independent over space with variances $\{\sigma^2_{v,k}\}$.
\item The noise process $\v_k(i)$ is independent of the regression data  $\u_{m,j}$ for all $i,j$ and $k,m$.
\hfill $\blacksquare$
\end{enumerate}
\end{assm}
The optimal parameter $w^o$ that minimizes (\ref{eq.:LocalObjectiveFunction}) can be found by setting the gradient vector of  $J(w)$ to zero.
This yields the following normal equations:
\be
\Big(\sum_{k=1}^N \bar{R}_{u,k}\Big) w^o = \sum_{k=1}^N \bar{r}_{du,k}
\label{eq.CentralizedNormalEquation}
\ee
where $\{\bar{R}_{u,k}, \bar{r}_{du,k}\}$ denote the second-order moments of $\u_{k,i} B_{k}$ and $\d_k(i)$:
\be
\bar{R}_{u,k}\triangleq B_k^T R_{u,k} B_k,\;\;\;\;
\bar{r}_{du,k} \triangleq B_k^T r_{du,k}
\label{eq.:bar_Ru}
\ee
It is clear from (\ref{eq.CentralizedNormalEquation}) that when $\sum_{k=1}^N \bar{R}_{u,k}>0$, then $w^o$ can be determined uniquely. If, on the other hand, $\sum_{k=1}^N \bar{R}_{u,k}$ is singular, then we can use its pseudo-inverse to recover the minimum-norm solution of (\ref{eq.CentralizedNormalEquation}). Once the global solution is estimated, we can retrieve the space-varying parameter vectors $h^o_k$ by substituting $w^o$ into (\ref{eq.:local_global_relation}).

Alternatively the solution $w^o$ of (\ref{eq.CentralizedNormalEquation}) can be sought iteratively by using the following steepest descent recursion:
\begin{align}
& \w^{(c)}_i= \w^{(c)}_{i-1}+\mu \sum_{k=1}^N \big (\bar{r}_{du,k}-\bar{R}_{u,k} \w^{(c)}_{i-1}\big )\label{eqe:centralized_steepest_eq1}
\end{align}
where $\mu>0$ is a step-size parameter and $ \w^{(c)}_i$ is the estimate of $w^o$ at the $i$-th iteration. Recursion (\ref{eqe:centralized_steepest_eq1}) requires the centralized processor to have knowledge of the covariance matrices, $\bar{R}_{u,k}$, and cross covariance vectors, $\bar{r}_{du,k}$, across all nodes.
In practice, these moments are unknown in advance, and we therefore use their instantaneous approximations in (\ref{eqe:centralized_steepest_eq1}). This substitution leads to the centralized LMS strategy (\ref{eq.:centralized-step1})--(\ref{eq.:centralized-step2}) for space-varying parameter estimation over networks.
\begin{algorithm}
\caption{:\;Centralized LMS}
\label{eq.CentralizedLmsSpaVarParaEst}
\begin{align}
&\w^{(c)}_i= \w^{(c)}_{i-1}+\mu\displaystyle\sum_{k=1}^N B_k^T \u_{k,i}^T \big(\d_k(i)-\u_{k,i} B_k \w^{(c)}_{i-1}\big)
\label{eq.:centralized-step1}\\
&\h_{k,i}=B_k \w^{(c)}_i, \quad k \in \{1,2,\cdots,N\}
\label{eq.:centralized-step2}
\end{align}
\end{algorithm}
\par \noindent
In this algorithm, at any given time instant $i$, each node transmits its data $\{\u_{k,i},\d_k(i)\}$ to the central processing unit to update $\w^{(c)}_{i-1}$. Subsequently, the algorithm obtains an estimate for the space-varying parameters, $\h_{k,i}$, by  using the updated estimate $\w^{(c)}_i$, and the basis function matrix at location $k$, (i.e., $B_k$). This latter step can also be used as an interpolation mechanism to estimate the space-varying parameters at locations other than the pre-determined locations $\{x_k\}$, by using the corresponding matrix $B(x)$ for some desired location $x$.
\subsection{Adaptive Diffusion Strategy}
\label{subsec.:diffusion_strategy}
There are different distributed optimization techniques that can be applied to (\ref{eq.:LocalObjectiveFunction})
in order to estimate  $w^o$ and consequently obtain the optimal space-varying parameters $h^o_k$.
Let $\mathcal{N}_k$ denote the index set of the neighbors of node $k$, i.e., the nodes with which node $k$ can share information (including $k$ itself). One possible optimization strategy is to decouple the global cost (\ref{eq.:LocalObjectiveFunction}) and write it as a set of constrained optimization problems with local variables $w_k$, \cite{boyd2011a}, i.e.,
\begin{align}
&\min_{w_k}\sum_{\ell\in {\cal N}_k} c_{\ell,k}\E| \d_{\ell}(i)- \u_{\ell,i}B_k w_k|^2 \nonumber \\
             &\text{subject to} \quad w_k=w
             \label{minimizationProblemLagrangien}
\end{align}
where $c_{\ell, k}$ are nonnegative entries of a right-stochastic matrix $C \in \amsbb{R}^{N \times N}$ satisfying:
\vspace{-0.2cm}
\begin{align}
&c_{\ell,k}=0 \;\, \text{if} \;\, \ell\notin \mathcal{N}_k \;\;\, \textrm{and} \;\;\; C{\mathbb 1}={\mathbb 1}
\label{eq.sthocastic_matrix_c_conditions}
\end{align}
and ${\mathbb 1}$ is the column vector with unit entries.

The optimization problem (\ref{minimizationProblemLagrangien}) can be solved using, for example, the alternating directions method of multipliers (ADMM) \cite{bertsekas1989parallel,boyd2011a}. In the algorithm derived using this method, the Lagrangian multipliers associated with the constraints need to be updated at every iteration during the optimization process. To this end, information about the network topology is required to establish a hierarchical communication structure between nodes. In addition, the constraints imposed by (\ref{minimizationProblemLagrangien}) require all agents to agree on an exact solution; this requirement degrades  the learning and tracking abilities of the nodes over the network. When some nodes observe relevant data, it is advantageous for them to be able to respond quickly to the data without being critically constrained
by perfect agreement at that stage. Doing so, would enable information to diffuse more rapidly across the network.

A technique that does not suffer from these difficulties and endows networks with adaptation and learning abilities in real-time is the diffusion strategy \cite{lopes2008diffusion,cattivelli2010diffusion,chen2012diffusion,sayed2012diffusion,sayed2013DiffusionMagazine}.
In this technique, minimizing the global cost (\ref{eq.:LocalObjectiveFunction}) motivates solving the following unconstrained local optimization problem for \mbox{$k\in\{1,\cdots,N\}$} \cite{cattivelli2010diffusion}:
\begin{align}
\min_w \Big(&\sum_{\ell\in {\cal N}_k} c_{\ell,k}\E| \d_{\ell}(i)- \u_{\ell,i}B_k w|^2 \nonumber \\
&\hspace{1cm}+\sum_{\ell\in {\cal N}_k \backslash\{k\}} p_{\ell,k}\|w-\psi_{\ell}\|^2\Big) \label{eq.:constrained-local-objectiveFunc.}
\end{align}
where $\psi_{\ell}$ is the available estimate of the global parameter at node $\ell$, ${\cal N}_k\backslash \{k\}$ denotes set ${\cal N}_k$ excluding node $k$, and $\{p_{\ell, k}\}$ are nonnegative scaling parameters. Following the arguments in \cite{cattivelli2010diffusion,chen2012diffusion,sayed2012diffusion}, the minimization of (\ref{eq.:constrained-local-objectiveFunc.}) leads to a general form of the diffusion strategy described by (\ref{eq.diff-step1})--(\ref{eq.diff-step4}), which can be specialized to several simpler yet useful forms.
\begin{algorithm}
\caption{:\;Diffusion LMS}
\label{eq.GeneralizedLmsSpaVarParaEst}
\begin{align}
&\bphi_{k,i-1}=\displaystyle\sum_{\ell\in \mathcal{N}_k} a^{(1)}_{\ell,k} \w_{\ell,i-1} \label{eq.diff-step1}\\
&{\bpsi}_{k,i}=\bphi_{k,i-1}+\mu_k\displaystyle\sum_{\ell\in \mathcal{N}_k}c_{\ell,k} B_{\ell}^T \u_{\ell,i}^T \big(\d_{\ell}(i)- \u_{\ell,i}B_{\ell}\bphi_{k,i-1}\big)\label{eq.diff-step2}\\
&\w_{k,i}=\displaystyle\sum_{\ell\in \mathcal{N}_k}a^{(2)}_{\ell, k}{\bpsi}_{\ell,i} \label{eq.diff-step3}\\
&\h_{k,i}=B_k \w_{k,i} \label{eq.diff-step4}
\end{align}
\end{algorithm}
%
\par \noindent
In this algorithm, $\mu_k>0$ is the step-size at node $k$, $\{\w_{k,i}, \bpsi_{k,i}, \bphi_{k,i-1}\}$ are intermediate estimates of $w^o$, $\h_{k,i}$ is an intermediate estimate of $h^o_k$, and  $\{ a^{(1)}_{\ell,k},a^{(2)}_{\ell,k}\}$ are nonnegative entries of left-stochastic matrices $A_1, A_2 \in \amsbb{R}^{N \times N}$
that satisfy:
\begin{align}
& a^{(1)}_{\ell,k}=a^{(2)}_{\ell,k}=0 \quad \text{if} \, \ell\notin \mathcal{N}_k
\label{eq.sthocastic_matrix_a1a2_conditions1} \\
&\quad A_1^T{\mathbb 1}={\mathbb 1}\quad  A_2^T{\mathbb 1}={\mathbb 1}
\label{eq.sthocastic_matrix_a1a2_conditions2}
\end{align}
Each node $k$ in the first combination step  fuses  $\{\w_{\ell,i-1}\}_{\ell \in {\cal N}_k}$ in a convex manner to generate $\bphi_{k,i-1}$. In the following step, named adaptation, each node $k$ uses its own data and that of neighboring nodes, i.e., $\big \{ \u_{\ell,i}, \d_{\ell}(i)\big\}_{\ell \in {\cal N}_k}$  to adaptively update $\bphi_{k,i-1}$ to an intermediate estimate $\bpsi_{k,i}$. In the third step, which is also a  combination, the intermediate estimates $\{ \bpsi_{\ell,i}\}_{\ell \in \mathcal{N}_k}$ are fused to further align the global parameter estimate at node $k$ to that of its neighbors. Subsequently, the desired space-varying parameter $\h_{k,i}$ is obtained from $\w_{k,i}$. Note that each step in the algorithm runs concurrently over the network.
\theoremstyle{remark}
\begin{remark}
The main difference between Algorithm \ref{eq.GeneralizedLmsSpaVarParaEst} and the previously developed diffusion LMS strategies in, e.g.,  \cite{lopes2008diffusion,cattivelli2010diffusion,sayed2012diffusion} is in the transformed domain regression data $\u_{\ell,i} B_{\ell}$ in (\ref{eq.diff-step2}) which now have singular covariance matrices. Moreover, there is an additional interpolation step (\ref{eq.diff-step4}). \hfill $\blacksquare$
\end{remark}

\theoremstyle{remark}
\begin{remark}
\label{Re.:MNandMNbRelation}
The proposed diffusion LMS algorithm estimates $N M$ spatially dependent variables  $\{h_k^o\}$  using $N_b M$ global invariant coefficients in $w^o$. From the computational complexity and energy efficiency point of view, it seems this is advantageous when the number of nodes, $N$, is greater than the number of basis functions $N_b$. However, even if this is not the case, using the estimated $N_b M$ global invariant coefficients, the algorithm not only can estimate the space-varying parameters at the locations of the $N$ nodes,
but can also estimate the space-varying parameters at locations where no measurements are available. Therefore, even when $N<N_b$, the algorithm is still useful as it can perform interpolation.
\hfill $\blacksquare$
\end{remark}

There are different choices for the combination matrices $\{A_1,A_2,C\}$. For example, the choice $A_1=A_2=C=I$ reduces the above diffusion algorithm to the non-cooperative case where each node runs an individual LMS filter without coordination with its neighbors. Selecting $C=I$ simplifies the adaptation step (\ref{eq.diff-step2}) to the case where node $k$ uses only its own data $\{\d_k(i),\u_{k,i}\}$ to perform local adaptation. Choosing $A_1=I$ and $A_2=A$, for some left-stochastic matrix $A$, removes  the first combination step and the algorithm reduces to an adaptation step followed by combination (this variant of the algorithm has the Adapt-then-Combine or ATC diffusion structure) \cite{cattivelli2010diffusion,sayed2012diffusion}. Likewise, choosing $A_1=A$ and $A_2=I$ removes the second combination step and the algorithm reduces to a combination step followed by adaptation (this variant has the Combine-then-Adapt (CTA) structure of diffusion \cite{cattivelli2010diffusion,sayed2012diffusion}). Often in practice, either the ATC or CTA version of the algorithm is used with $C$  set to $C=I$ such as using the following ATC diffusion version described by equations (\ref{eq.atc-step1})--(\ref{eq.atc-step3}).
\begin{algorithm}
\caption{:\;Diffusion ATC}
\label{eq.ATCLmsSpaVarParaEst}
\begin{align}
&{\bpsi}_{k,i}=\w_{k,i-1}+\mu_k B_{k}^T \u_{k,i}^T\big(\d_{k}(i)-\u_{k,i}B_{k}\w_{k,i-1}\big)
\label{eq.atc-step1}\\
&\w_{k,i}=\displaystyle\sum_{\ell\in \mathcal{N}_k}a_{\ell, k}{\bpsi}_{\ell,i}
\label{eq.atc-step2}\\
&\h_{k,i}=B_k \w_{k,i} \label{eq.atc-step3}
\end{align}
\end{algorithm}

Nevertheless for generality, we shall study the performance of  Algorithm \ref{eq.GeneralizedLmsSpaVarParaEst} for arbitrary matrices $\{A_1,A_2,C\}$ with $C$ right-stochastic and $\{A_1,A_2\}$ left-stochastic. The results can then be specialized to various situations of interest, including ATC, CTA, and the non-cooperative case.

The combination matrices $\{A_1,A_2,C\}$ are normally obtained using some well-known available combination rules such as the Metropolis or uniform combination rules \cite{xiao2006space,lopes2008diffusion,cattivelli2010diffusion}. These matrices can also be treated as free variables in the optimization procedure and used to further enhance the performance of the diffusion strategies. Depending on the network topology and the quality of the communication links between nodes, the optimized values of the combination matrices differ from one case to another\cite{sayed2012diffusion,zhao2012imperfect,takahashi2010diffusion,abdolee2011diffusion}.
\section{Performance Analysis}
\label{sec.:performance_analysis}
In this section, we analyze the performance of the diffusion strategy (\ref{eq.diff-step1})-(\ref{eq.diff-step4}) in the mean and mean-square sense and derive expressions to characterize the network
mean-square deviation (MSD) and excess mean-square error (EMSE).
In the analysis, we need to consider the fact that the covariance matrices $\{\bar{R}_{u,k}\}_{k=1}^N$ defined in (\ref{eq.:bar_Ru}) are now rank-deficient since we have $N_b>1$. We explain in the sequel the ramifications that follow from this rank-deficiency.
\subsection{Mean Convergence}
We introduce the local weight-error vectors
\be
\tilde  \w_{k,i}\triangleq w^o- \w_{k,i},\, \tilde{\bpsi}_{k,i} \triangleq w^o-\bpsi_{k,i},\,\tilde{\bphi}_{k,i}\triangleq w^o-\bphi_{k,i}
\ee
and define the network error vectors:
\begin{align}
&\tilde{\bphi}_i \triangleq \col\{\tilde{\bphi}_{1,i},\ldots,\tilde{\bphi}_{N,i}\}\\
&\tilde{\bpsi}_i \triangleq \col\{\tilde{\bpsi}_{1,i},\ldots,\tilde{\bpsi}_{N,i}\}\\
&\tilde  \w_i \triangleq \col\{\tilde \w_{1,i},\ldots,\tilde  \w_{N,i}\}
\end{align}
We collect the estimates from across the network into the block vector:
\begin{align}
&\w_i\triangleq \col\{ \w_{1,i},\ldots, \w_{N,i}\}
\label{eq.:bomegai}
\end{align}
and introduce the following extended combination matrices:
\begin{align}
{\mathcal A_1}& \triangleq A_1\otimes I_{MN_b} \\
{\mathcal A_2}& \triangleq A_2\otimes I_{MN_b} \\
{\mathcal C}& \triangleq C\otimes I_{MN_b}
\end{align}
We further define the block diagonal matrices and vectors:
\begin{align}
\boldsymbol{\cal R}_i& \triangleq \text{diag}\Big \{ \small \sum_{\ell \in {\cal N}_k} c_{\ell, k} B_{\ell}^T \u_{\ell,i}^T \u_{\ell,i} B_{\ell}
: k=1,\cdots,N \normalsize \Big \}\\
{\mathcal M}& \triangleq \text{diag}\big \{ \mu_1 I_{MN_b},\ldots, \mu_N I_{MN_b} \big \} \label{eq.:calM}\\
\boldsymbol{t}_i& \triangleq \col\Big \{\sum_{\ell \in {\cal N}_k} c_{\ell, k} B_{\ell}^T \u_{\ell,i}^T \d_{\ell}(i): k=1,\cdots,N \Big \}\\
\boldsymbol{g}_i& \triangleq {\mathcal C}^T \, \col\big \{B_1^T \u_{1,i}^T \v_1(i),\cdots, B_N^T \u_{N,i}^T \v_N(i) \big \}
\end{align}
and introduce the expected values of $\boldsymbol{\cal R}_i$ and $\boldsymbol{t}_i$:
\begin{align}
&{\cal R}\triangleq \E[\boldsymbol{\cal R}_i] =\text{diag}\big \{{R}_1,\cdots,{R}_N\big\}
\label{eq.:Rcal_rank_defficientDiff}\\
&r \triangleq \E[\boldsymbol{{t}}_i]=\col\big \{r_1,\cdots, r_N\big\}
\label{eq.:Rdcal_rank_defficientDiff}
\end{align}
where
\be
R_k \triangleq \sum_{\ell \in {\cal N}_k} c_{\ell,k} \,\bar{R}_{u,\ell}
\label{eq.:localR}
\ee
\be
r_{k}\triangleq \sum_{\ell \in {\cal N}_k} c_{\ell,k} \,\bar{r}_{du,\ell}
\label{eq.:localr}
\ee
We also introduce an indicator matrix operator, denoted by $\Ind(\cdot)$, such that for any real-valued matrix $X$ with $(k,j)$-th entry $X_{k,j}$, the corresponding entry of $Y=\Ind(X)$ is:
\begin{equation}
Y_{k,j}=\left\{
\begin{array}{l l}
1, \qquad \text{if}\;X_{k,j}>0\\
0,  \qquad \text{otherwise}
\end{array} \right.
\end{equation}
Now from (\ref{eq.diff-step1})--(\ref{eq.diff-step3}), we obtain:
\begin{align}
&\w_i= \boldsymbol{\mathcal{B}}_i \w_{i-1}+{\mathcal A}^T_2{\mathcal M}\boldsymbol{t}_i
\label{eq.:network_vector_update}
\end{align}
where
\be
\boldsymbol{\mathcal{B}}_i \triangleq {\mathcal A}_2^T(I-{\mathcal M}\boldsymbol{\cal R}_i){\mathcal A}^T_1
\ee
In turn, making use of (\ref{eq.:measurement_linear_model2}) in (\ref{eq.:network_vector_update}), we can verify that the network error vector follows the recursion
\begin{align}
\tilde  \w_i=\boldsymbol{\mathcal{B}}_i \tilde  \w_{i-1}-{\mathcal A}_2^T{\mathcal M} \boldsymbol{g}_i
\label{eq.:global_error_vector_w}
\end{align}
By taking the expectation of both sides of (\ref{eq.:global_error_vector_w}) and using \mbox{Assumption \ref{assm.:regressor assumption}}, we arrive at:
\begin{align}
\E[\tilde  \w_i]={\cal B}\, \E[\tilde  \w_{i-1}]
\label{eq.:mean_perfomance}
\end{align}
where in this relation:
\begin{align}
{\cal B}\triangleq \E[\boldsymbol{\mathcal{B}}_i]={\mathcal A}_2^T(I-{\mathcal M}{\cal R}){\mathcal A}^T_1
\label{eq.:calB}
\end{align}
To obtain (\ref{eq.:mean_perfomance}), we used the fact that the expectation of the second term in (\ref{eq.:global_error_vector_w}), i.e., $\E[{\mathcal A}^T_2{\mathcal M} \boldsymbol{g}_i]$, is zero because $ \v_k(i)$ is independent of $ \u_{k,i}$ and $\E[ \v_k(i)]=0$.
The rank-deficient matrices  $\{\bar{R}_{u,k}\}$ appear inside ${\cal R}$ in (\ref{eq.:calB}). We now verify that despite having rank-deficient matrix ${\cal R}$, recursion (\ref{eq.:mean_perfomance}) still guarantees a bounded mean error vector in steady-state.

To proceed, we introduce the eigendecomposition:
\begin{equation}
R_k=Q_k\Lambda_k Q_k^T
\end{equation}
where $Q_k=[q_{k,1},\cdots,q_{k,MN_b}]$ is a unitary matrix with column eigenvectors $q_{k,j}$  and $\Lambda_k=\diag\{\lambda_k(1),\cdots,\lambda_k({MN_b})\}$ is a diagonal matrix with eigenvalues $\lambda_k(j)\geq 0$. For this decomposition, we assume that the eigenvalues of $R_k$ are arranged in descending order, i.e, $\lambda_{\max}(R_k) \triangleq \lambda_k(1) \geq\lambda_k(2)\geq\cdots\geq\lambda_k(MN_b)$, and the rank of $R_k$ is  $L_k\leq MN_b$. If we define ${\cal Q}\triangleq \text{diag}\{Q_1,\ldots
,Q_N\}$ and  $\Lambda\triangleq \diag \{\Lambda_1,\cdots ,\Lambda_{N}\}$, then the network covariance matrix, ${\cal R}$, given by (\ref{eq.:Rcal_rank_defficientDiff}) can be expressed as:
\begin{equation}
{\cal R}={\cal Q}\Lambda {\cal Q}^T
\label{eq.:EigenDecompositionOfD}
\end{equation}
We now note that the  mean estimate vector, $\E[\tilde{\w}_i]$, expressed by (\ref{eq.:mean_perfomance}) will be asymptotically unbiased if the spectral radius of ${\cal B}$, denoted by $\rho({\cal B})$, is strictly less than one. Let us examine under what conditions this requirement is satisfied. Since $A_1$ and $A_2$ are left-stochastic matrices and ${\cal R}$ is block-diagonal, we have from \cite{sayed2012diffusion} that:
\begin{equation}
\rho({\cal B})=\rho\Big({\cal A}^T_2(I-{\cal M} {\cal R}){\cal A}^T_1\Big)\leq \rho\big(I-{\cal M} {\cal R}\big)
\label{eq.:spectral-inequalities}
\end{equation}
Therefore, if ${\cal R}$ is positive-definite, then choosing \mbox{$\mu_k<2/\lambda_{\max}(R_{k})$} ensures convergence of the algorithm in the mean so that $\E[\tilde{\w}_i]\rightarrow 0$ as $i\rightarrow\infty$.
However, when ${\cal R}$ is singular, it may hold that $\rho({\cal B})=1$, in which case choosing the step-sizes according to the above bound guarantees the boundedness of the mean error, $\E[\tilde{\w}_i]$, but not necessarily that it converges to zero. The following result clarifies these observations.
\newtheorem{thm}{Theorem}
\begin{thm}
\label{lemm.:rank_deficient_distributed_lms_thm1}
If the step-sizes are chosen to satisfy
\begin{equation}
    0<\mu_k<\frac{2} {\lambda_{\max}(R_k)}
    \label{eq.:step_size_difusion_lms_space_varying}
\end{equation}
then, under Assumption \ref{assm.:regressor assumption}, the diffusion algorithm is stable in the mean in the following sense:
(a) If $\rho({\cal B})<1$, then $\E[\tilde{\w}_i]$ converges to zero and (b) if $\rho({\cal B})=1$ then
\begin{align}
\lim_{i \rightarrow \infty}\big\|\E[{\tilde  \w}_i]\big\|_{b,\infty}&\leq\|I-\Ind(\Lambda)\|_{b,\infty} \,  \big \|\E[\tilde{\w}_{-1}]\big \|_{b,\infty}
\label{eq.:mean_perfomance5}
\end{align}
where $\|\cdot\|_{b,\infty}$ stands for the block-maximum norm, as defined in \cite{sayed2012diffusion,takahashi2010diffusion}.
\end{thm}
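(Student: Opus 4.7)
The plan is to iterate the mean recursion in (\ref{eq.:mean_perfomance}) to write $\E[\tilde{\w}_i]=\mathcal{B}^{\,i+1}\E[\tilde{\w}_{-1}]$ and then estimate $\|\mathcal{B}^{\,i+1}\|_{b,\infty}$ by exploiting the product structure $\mathcal{B}={\cal A}_2^{T}(I-{\cal M}{\cal R}){\cal A}_1^{T}$. Part (a) is immediate: when $\rho({\cal B})<1$, a standard matrix-analysis argument gives $\mathcal{B}^{\,i}\to 0$ in any submultiplicative norm, so $\E[\tilde{\w}_i]\to 0$.

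For part (b), the strategy is first to establish the telescoping bound
\[
\|\mathcal{B}^{\,i+1}x\|_{b,\infty}\;\le\;\|I-{\cal M}{\cal R}\|_{b,\infty}^{\,i+1}\,\|x\|_{b,\infty},
\]
and then to identify the $i\to\infty$ limit of the right-hand side with $\|I-\Ind(\Lambda)\|_{b,\infty}\|x\|_{b,\infty}$. The telescoped inequality rests on the observation that, since $A_1$ and $A_2$ are left-stochastic and ${\cal A}_j=A_j\otimes I_{MN_b}$, the extended combination matrices are non-expansive under the block-maximum norm, namely $\|{\cal A}_j^{T}y\|_{b,\infty}\le\|y\|_{b,\infty}$ for $j=1,2$, which is a standard property of $\|\cdot\|_{b,\infty}$ (see \cite{sayed2012diffusion,takahashi2010diffusion}). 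Combining this non-expansiveness with sub-multiplicativity applied to the block-diagonal factor $I-{\cal M}{\cal R}$, and iterating across the $i+1$ copies of ${\cal B}$, yields the displayed bound.

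Next, the block-diagonal structure of $I-{\cal M}{\cal R}$ together with the eigendecomposition (\ref{eq.:EigenDecompositionOfD}) gives
\[
\|I-{\cal M}{\cal R}\|_{b,\infty}=\max_{k}\|I-\mu_k R_k\|_2=\max_{k,j}\,|1-\mu_k\lambda_k(j)|,
\]
and the step-size admissibility condition (\ref{eq.:step_size_difusion_lms_space_varying}) forces $|1-\mu_k\lambda_k(j)|\le 1$, with equality if and only if $\lambda_k(j)=0$. Consequently $|1-\mu_k\lambda_k(j)|^{\,i+1}\to 1$ when $\lambda_k(j)=0$ and $\to 0$ otherwise, so the asymptotic value of the outer maximum coincides with $\|I-\Ind(\Lambda)\|_{b,\infty}$ directly from the definition of $\Ind(\cdot)$. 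Specializing $x=\E[\tilde{\w}_{-1}]$ then yields (\ref{eq.:mean_perfomance5}).

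The main obstacle is the non-commutativity of the three factors composing ${\cal B}$: because ${\cal A}_1^{T}$, ${\cal A}_2^{T}$ and $I-{\cal M}{\cal R}$ do not share eigenvectors, one cannot diagonalize $\mathcal{B}^{\,i+1}$ directly, and the rank-deficiency of ${\cal R}$ blocks any naive spectral argument. The way around this is to avoid a spectral decomposition of the full matrix ${\cal B}$ altogether and argue purely at the level of norms, using the non-expansiveness of ${\cal A}_1^{T}$ and ${\cal A}_2^{T}$ under $\|\cdot\|_{b,\infty}$ to strip away the combination layers and expose the rank-deficient factor $I-{\cal M}{\cal R}$, whose norm is transparent via eigendecomposition; once this is in place, the remaining scalar limit argument is routine.
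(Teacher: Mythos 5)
Your proposal is correct and follows essentially the same route as the paper's Appendix A proof: iterate $\E[\tilde{\w}_i]={\cal B}^{i+1}\E[\tilde{\w}_{-1}]$, use the fact that $\|{\cal A}_1^T\|_{b,\infty}=\|{\cal A}_2^T\|_{b,\infty}=1$ together with sub-multiplicativity to bound $\|{\cal B}^{i+1}\|_{b,\infty}\le\|I-{\cal M}{\cal R}\|_{b,\infty}^{i+1}$, evaluate that norm through the eigendecomposition of the block-diagonal symmetric factor as $\max_{k,n}|1-\mu_k\lambda_k(n)|$, and pass to the limit so that the zero eigenvalues produce exactly $\|I-\Ind(\Lambda)\|_{b,\infty}$. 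The only cosmetic difference is that the paper routes the norm evaluation through block-diagonal unitary invariance and the identity $\|X\|_{b,\infty}=\rho(X)$ for Hermitian block-diagonal $X$, which is equivalent to your direct computation of the per-block spectral norms.
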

\begin{proof}
See Appendix \ref{apex.:Mean Convergence Proof}.
\end{proof}
In what follows, we examine recursion (\ref{eq.:network_vector_update}) and derive an expression for the asymptotic value of $\E[\w_i]$---see (\ref{eq.:lim-bomegai2}) further ahead. Before doing so, we first comment on a special case of interest, namely, result (\ref{eq.:network_global_decoupled_solutionA1A2I_PSD}) below.

{\it Special case}: Consider a network with ${A}_1={A}_2=I$ and an arbitrary right stochastic matrix $C$ satisfying (\ref{eq.sthocastic_matrix_c_conditions}).
Using (\ref{eq.:measurement_linear_model2}) and (\ref{eq.:localR})-(\ref{eq.:localr}), it can be verified that the following linear system of equations holds at each node $k$:
\be
R_k w^o = r_k
\label{eq.:NodeNormalEquation}
\ee
We show in Appendix \ref{apex.:error-bound-A1A2I} that under condition (\ref{eq.:step_size_difusion_lms_space_varying}) the mean estimate of the diffusion LMS algorithm at each node $k$ will converge to:
\begin{align}
\lim_{i\rightarrow \infty}\E[\w_{k,i}]=R_k^{\dag} r_k+\sum_{n={L_k+1}}^{MN_b} q_{k,n}q_{k,n}^T \E[\w_{k,-1}]
\label{eq.:network_global_decoupled_solutionA1A2I_PSD}
\end{align}
where $R_k^{\dag}$ represents the pseudo-inverse of $R_k$, and $\w_{k,-1}$ is the node initial value. This result is consistent with the mean estimate of the stand-alone LMS filter with rank-deficient input data (which corresponds to the situation $A_1=A_2=C=I$)\cite{mclernon2009convergence}.
Note that $R_k^{\dag} r_k$ in (\ref{eq.:network_global_decoupled_solutionA1A2I_PSD}) corresponds to the minimum-norm solution of $R_k w=r_k$. Therefore, the second term on the right hand side of (\ref{eq.:network_global_decoupled_solutionA1A2I_PSD}) is the deviation of the node estimate from this minimum-norm solution. The presence of this term after convergence is due to the zero eigenvalues  of $R_k$. If $R_k$ were full-rank so that $L_k=MN_b$, then this term would disappear and the node estimate will converge, in the mean, to its optimal value, $w^o$.  We point out that even though the matrices $\bar{R}_{u,\ell}$ are rank deficient since $N_b>1$, it is still possible for the matrices $R_k$ to be full rank owing to the linear combination operation in (\ref{eq.:localR}). This illustrates one of the benefits of employing the right-stochastic matrix $C$. However, if despite using $C$, $R_k$ still remains rank-deficient, the second term on the right-hand side of (\ref{eq.:network_global_decoupled_solutionA1A2I_PSD})
can be annihilated by proper node initialization (e.g., by setting $\E[\w_{k,-1}]=0$).
By doing so, the mean estimate of each node will then approach the unique minimum-norm solution, $R_k^{\dag} r_k$.

{\it General case}: Let us now find the mean estimate of the network for arbitrary left-stochastic matrices $A_1$ and $A_2$. Considering  definitions (\ref{eq.:Rcal_rank_defficientDiff})-(\ref{eq.:Rdcal_rank_defficientDiff})
and relation (\ref{eq.:NodeNormalEquation}) and noting that ${\cal A}^T_1 ({\mathbb 1} \otimes w^o)= {\cal A}^T_2 ({\mathbb 1} \otimes w^o)=({\mathbb 1} \otimes w^o)$, it can be verified that  $({\mathbb 1} \otimes w^o)$ satisfies the following linear system of equations:
\begin{align}
(I-{\cal B}) ({\mathbb 1} \otimes w^o) ={\mathcal A}^T_2{\mathcal M}r
\label{eq.:network_estimate_rank_def1}
\end{align}
This is a useful intermediate result that will be applied in our argument.

Next, if we iterate recursion (\ref{eq.:network_vector_update}) and apply the expectation operator, we then obtain
\be
\E [\w_i]={\mathcal{B}}^{i+1} \E[\w_{-1}]+\sum_{j=0}^{i}{\cal B}^j{ \cal A}^T_2{\mathcal M}r
\label{eq.:w_iteratedSolutionRank-deficient}
\ee
The mean estimate of the network can be found by computing the limit of this expression for $i\rightarrow \infty$. To find the limit of the first term on the right hand side of (\ref{eq.:w_iteratedSolutionRank-deficient}), we evaluate $\lim_{i\rightarrow \infty}{\mathcal{B}}^{i}$ and find conditions under which it converges. For this purpose, we introduce the Jordan decomposition of matrix ${\cal B}$ as \cite{meyer2000matrix}:
\begin{align}
{\cal B}&={\cal Z}\Gamma{\cal Z}^{-1}
\label{eq.:jordan-decompostion}
\end{align}
where ${\cal Z}$ is an invertible matrix, and $\Gamma$ is a block diagonal matrix of the form
\begin{align}
\Gamma=\diag \Big \{ \Gamma_{1},\Gamma_{2},\cdots,\Gamma_{s}\Big\}
\end{align}
where the $l$-th Jordan block, $\Gamma_{l}\in {\mathbb C}^{m_l \times m_l}$, can be expressed as:
\be
\Gamma_l=\gamma_l I_{m_l}+N_{m_l}
\label{eq.:block-decomposition}
\ee
In this relation, $N_{m_l}$ is some nilpotent matrix of size $m_l\times m_l$. Using decomposition (\ref{eq.:jordan-decompostion}), we can express ${\cal B}^i$ as
\begin{align}
{\mathcal{B}}^{i}&={\cal Z}\Gamma^i{\cal Z}^{-1}
\end{align}
Since $\Gamma$ is block diagonal, we have
\begin{align}
\Gamma^i=\diag \Big \{\Gamma_1^i,\Gamma_2^i,\cdots,\Gamma_s^i\Big\}
\end{align}
From this relation, it is deduced that $\lim_{i\rightarrow \infty} {\cal B}^i$ exists if $\lim_{i \rightarrow \infty} \Gamma_l^i$
exists for all $l\, \in\{1,\cdots,s\}$. Using (\ref{eq.:block-decomposition}), we can write \cite{meyer2000matrix}:
\begin{align}
\lim_{i\rightarrow \infty} \Gamma_l^i=\lim_{i\rightarrow \infty}  \gamma_l^{i-m_l}\Bigg(\gamma_l^{\scrsmall m_l} I_{ m_{\scrsmall l}}+\sum_{p=1}^{m_l-1} \dbinom{i}{p} \gamma_l^{m_l-p}N_{\scrsmall m_l}^p \Bigg)
\label{eq.:limGammaL}
\end{align}
When ${i\rightarrow \infty}$,  $\gamma_l^{i-m_l}$ becomes the dominant factor in this expression. Note that
under condition (\ref{eq.:step_size_difusion_lms_space_varying}), we have $\rho({\cal B})\leq 1$ which in turn implies that
the magnitude of the eigenvalues of $\cal B$ are bounded as $0 \leq |\gamma_n|\leq 1$. Without loss of generality, we assume that the eigenvalues of ${\cal B}$ are arranged as $|\gamma_1|\leq \cdots \leq |\gamma_{\scrsmall{L}}| < |\gamma_{\scrsmall{L+1}}|=\cdots=|\gamma_{ s}|=1$. Now we examine the limit (\ref{eq.:limGammaL}) for every $|\gamma_l|$ in this range. Clearly for $|\gamma_l|<1$, the limit is
zero (an obvious conclusion since in this case $\Gamma_l$ is a stable matrix). For $|\gamma_l|=1$, the limit is the identity matrix if $\gamma_l=1$ and $m_l=1$. However, the limit does not exist for unit magnitude complex eigenvalues and eigenvalues with value \mbox{-1}, even when $m_l=1$. Motivated by these observations, we introduce the following definition.

{\bf Definition}:
We refer to matrix ${\cal B}$ as \emph{power convergent} if (a) its eigenvalues $\gamma_n$ satisfy $0\leq |\gamma_n| \leq 1$, (b) its unit magnitude eigenvalues are all equal to one, and (c) its Jordan blocks associated with $\gamma_n=1$ are all of size $1\times 1$.
\hfill $\blacksquare$

\noindent {\em Example 1}: Assume $N_b=1$, $B_k=I_M$, and uniform step-sizes and covariance matrices across the agents, i.e.,  $\mu_k\equiv \mu$, $R_{u,k}\equiv R_u$ for all $k$. Assume further that $C$ is doubly-stochastic (i.e., $C^T {\mathbb 1}={\mathbb 1}=C{\mathbb 1})$ and $R_u$ is singular. Then, in this case, the matrix ${\cal B}$ can be written as the Kronecker product ${\cal B}=A_2^TA_1^T\otimes (I_M-\mu R_u)$. For strongly-connected networks where  $A_1A_2$ is a primitive matrix, it follows from the Perron-Frobenius Theorem \cite{horn2003matrix} that $A_1A_2$ has a single unit-magnitude eigenvalue at one, while all other eigenvalues have magnitude less than one. We conclude in this case, from the properties of Kronecker products and under condition (\ref{eq.:step_size_difusion_lms_space_varying}), that ${\cal B}$ is a power-convergent matrix. \hfill $\blacksquare$\\

\noindent {\em Example 2}: Assume $M=2$, $N=3$, $N_b=1$, $B_k=I_M$, and uniform step-sizes and covariance matrices across the agents again. Let $A_2=I=C$ and select
\be
A_1=A=\left [\begin{array}{cccc}1/2&0&0\\1/2&0&1\\0&1&0\end{array} \right ]
\ee
which is not primitive. Let further $R_u=\mbox{\rm diag}\{\beta,0\}$ denote a singular covariance matrix.  Then, it can be verified in this case  the corresponding matrix ${\cal B}$ will have an eigenvalue with value $-1$ and is not power convergent. \hfill $\blacksquare$

Returning to the above definition and assuming ${\cal B}$ is power convergent, then this means that the Jordan decomposition (\ref{eq.:jordan-decompostion}) can be rewritten as:
\be
{\cal B}=[\underbrace{{\cal Z}_1\; {\cal Z}_2}_{\cal Z}] \underbrace{\left[
                                   \begin{array}{cc}
                                      J& 0  \\
                                      0& I \\
                                   \end{array}
                                  \right]}_{\Gamma}
                                  \underbrace{
                                  \left[
                                    \begin{array}{c}
                                      \bar{\cal Z}_1 \\
                                      \bar{\cal Z}_2 \\
                                    \end{array}
                                  \right]}_{{\cal Z}^{-1}}
                                  \label{eq.:cal-B-definition}
\ee
where $J$ is a Jordan matrix with all eigenvalues strictly inside the unit circle, and the identity matrix  inside $\Gamma$ accounts for the eigenvalues with value one. In (\ref{eq.:cal-B-definition}) we further partition ${\cal Z}$ and ${\cal Z}^{-1}$ in accordance with the size of $J$. Using (\ref{eq.:cal-B-definition}), it is straightforward to verify that
\begin{align}
\lim_{i \rightarrow \infty} {\cal B}^{i+1}&={\cal Z}_2{\bar{\cal Z}}_2
\label{eq.:Bi-limit}
\end{align}
and if we multiply both sides of (\ref{eq.:network_estimate_rank_def1}) from the left by $\bar{\cal Z}_2$, it also follows that
\begin{align}
{\bar{\cal Z}}_2{ \cal A}^T_2{\mathcal M}r&=0
\label{eq.:barCalZ2-Mr}
\end{align}
Using these relations, we can now establish the following result, which describes the limiting behavior of the weight vector estimate.
\begin{thm}
\label{lemm.:mean estimate-general}
If the step-sizes $\{\mu_1,\cdots,\mu_N\}$ satisfy (\ref{eq.:step_size_difusion_lms_space_varying})  and matrix ${\cal B}$ is power convergent, then the mean estimate of the network given by (\ref{eq.:w_iteratedSolutionRank-deficient}) asymptotically converges to:
\begin{align}
\lim_{i\rightarrow \infty} &\E[\w_i]=({\cal Z}_2{\bar{\cal Z}}_2)\, \E[\w_{-1}]+(I-{\cal B})^{-} {\cal A}^T_2{\mathcal M}r
\label{eq.:lim-bomegai2}
\end{align}
where the notation $X^-$ denotes a (reflexive) generalized inverse for the matrix $X$. In this case,  the generalized inverse for $I-{\cal B}$ is given by
\be
(I-{\cal B})^{-} = {\cal Z}_1 (I-J)^{-1}\bar{{\cal Z}}_1
\ee
which is in terms of the factors $\{{\cal Z}_1,\bar{\cal Z}_1,J\}$ defined in (\ref{eq.:cal-B-definition}).
\end{thm}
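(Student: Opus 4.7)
The plan is to take the limit of recursion (\ref{eq.:w_iteratedSolutionRank-deficient}) term by term, using the Jordan splitting (\ref{eq.:cal-B-definition}) of ${\cal B}$. The first term, ${\cal B}^{i+1}\E[\w_{-1}]$, is handled immediately by the already-established identity (\ref{eq.:Bi-limit}), which gives the contribution $({\cal Z}_2\bar{\cal Z}_2)\E[\w_{-1}]$. The real work is the forcing sum $\sum_{j=0}^i {\cal B}^j {\cal A}_2^T {\cal M} r$, which I will attack next.

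First, I would substitute the decomposition ${\cal B}^j = {\cal Z}_1 J^j \bar{\cal Z}_1 + {\cal Z}_2\, I \,\bar{\cal Z}_2$ into the sum and separate it into two parts,
\begin{align}
\sum_{j=0}^{i} {\cal B}^j \,{\cal A}^T_2 {\cal M} r \;=\; {\cal Z}_1 \Bigl(\sum_{j=0}^{i} J^j\Bigr) \bar{\cal Z}_1 {\cal A}^T_2 {\cal M} r \;+\; (i+1)\, {\cal Z}_2 \bar{\cal Z}_2 {\cal A}^T_2 {\cal M} r . \nonumber
\end{align}
The second summand looks dangerous because of the $(i+1)$ factor, and this is where I expect the main subtlety. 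The rescue is exactly (\ref{eq.:barCalZ2-Mr}), which says $\bar{\cal Z}_2 {\cal A}^T_2 {\cal M} r = 0$; hence that whole piece vanishes identically for every $i$. For the first summand, all eigenvalues of $J$ lie strictly inside the unit disk by the power-convergence hypothesis, so $\sum_{j=0}^{i} J^j$ converges to $(I-J)^{-1}$. Combining both contributions yields
\begin{align}
\lim_{i\to\infty} \sum_{j=0}^{i} {\cal B}^j\,{\cal A}^T_2 {\cal M} r \;=\; {\cal Z}_1 (I-J)^{-1} \bar{\cal Z}_1\,{\cal A}^T_2 {\cal M} r, \nonumber
\end{align}
which matches the second term in (\ref{eq.:lim-bomegai2}) provided I can identify ${\cal Z}_1 (I-J)^{-1}\bar{\cal Z}_1$ as a generalized inverse of $I-{\cal B}$.

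To wrap up, I would verify the generalized-inverse claim directly from (\ref{eq.:cal-B-definition}). Writing $I-{\cal B} = {\cal Z}_1 (I-J) \bar{\cal Z}_1$ (the identity block of $\Gamma$ contributes zero) and using $\bar{\cal Z}_1 {\cal Z}_1 = I$ (from ${\cal Z}^{-1}{\cal Z}=I$), a short computation gives $(I-{\cal B})(I-{\cal B})^-(I-{\cal B}) = I-{\cal B}$ and $(I-{\cal B})^-(I-{\cal B})(I-{\cal B})^- = (I-{\cal B})^-$, establishing reflexivity. Adding the two limits gives (\ref{eq.:lim-bomegai2}).

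The only delicate point in the argument is the vanishing of the $(i+1){\cal Z}_2\bar{\cal Z}_2 {\cal A}^T_2 {\cal M} r$ term; without the structural identity (\ref{eq.:barCalZ2-Mr}) — which itself comes from left-multiplying (\ref{eq.:network_estimate_rank_def1}) by $\bar{\cal Z}_2$ and using that the identity block of $\Gamma$ corresponds to eigenvalue one — the partial sums would drift linearly in $i$. Everything else reduces to standard manipulations with the Jordan form and the stable block $J$, so I expect no further obstacles.
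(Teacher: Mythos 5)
Your proposal is correct and follows essentially the same route as the paper's own proof in Appendix~\ref{apex.:mean estimate-general}: split ${\cal B}^j={\cal Z}_1 J^j\bar{\cal Z}_1+{\cal Z}_2\bar{\cal Z}_2$, kill the potentially divergent $(i+1){\cal Z}_2\bar{\cal Z}_2{\cal A}_2^T{\cal M}r$ term via (\ref{eq.:barCalZ2-Mr}), sum the stable geometric series to ${\cal Z}_1(I-J)^{-1}\bar{\cal Z}_1$, and verify the reflexive generalized-inverse conditions using $I-{\cal B}={\cal Z}_1(I-J)\bar{\cal Z}_1$ and $\bar{\cal Z}_1{\cal Z}_1=I$. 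No gaps to report.
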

\begin{proof}
See Appendix \ref{apex.:mean estimate-general}.
\end{proof}
We also argue in Appendix \ref{apex.:mean estimate-general} that the quantity on the right-hand side of (\ref{eq.:lim-bomegai2}) is invariant under basis transformations for the Jordan factors $\{{\cal Z}_1,\bar{\cal Z}_1,{\cal Z}_2,\bar{\cal Z}_2\}$. It can be verified that if $A_1=A_2=I$ then ${\cal B}$ will be symmetric and the result (\ref{eq.:lim-bomegai2}) will reduce to (\ref{eq.:network_global_decoupled_solutionA1A2I_PSD}). Now note that the first term on the right hand side of (\ref{eq.:lim-bomegai2}) is due to the zero eigenvalues of $I-{\cal B}$. From this expression, we observe that different initialization values generally lead to different estimates. However, if we set $\E[\w_{-1}]=0$, the algorithm converges to:
\be
\lim_{i\rightarrow \infty}\E[\w_i]=(I-{\cal B})^{-} {\cal A}^T_2{\mathcal M}r
\label{eq.:lim-bomegai3}
\ee
In other words, the diffusion LMS algorithm will converge on average to a generalized inverse solution of
the linear system of equations  defined by (\ref{eq.:network_estimate_rank_def1}).

When matrix ${\cal B}$ is stable so that $\rho({\cal B})<1$ then the factorization (\ref{eq.:cal-B-definition}) reduces to the form ${\cal B}={\cal Z}_1 J {\bar {\cal Z}}_1$ and $I-{\cal B}$ will be full-rank. In that case, the first term on the right hand side of (\ref{eq.:lim-bomegai2}) will be zero and the generalized inverse
will coincide with the actual matrix inverse so that (\ref{eq.:lim-bomegai2}) becomes
\begin{align}
\lim_{i\rightarrow \infty} &\E[\w_i]=(I-{\cal B})^{-1}{ \cal A}^T_2{\mathcal M}r
\label{eq.:lim-bomegaiIBfullRank2}
\end{align}
Comparing (\ref{eq.:lim-bomegaiIBfullRank2}) with (\ref{eq.:network_estimate_rank_def1}), we conclude that:
\begin{align}
\lim_{i\rightarrow \infty} &\E[\w_i]={\mathbb 1} \otimes w^o
\end{align}
which implies that the mean estimate of each node will be $w^o$. This result is in agreement with the previously developed mean-convergence analysis of diffusion LMS when the regression data have full rank covariance matrices \cite{sayed2012diffusion}.

\subsection{Mean-Square Error Convergence}
\label{subsec.:Mean-SquareStability}
We now examine the mean-square stability of the error recursion (\ref{eq.:global_error_vector_w}) in the rank-deficient scenario. We begin by deriving an error variance relation as in \cite{sayed2008,al2003transient}. To find this relation, we form the weighted square ``norm'' of (\ref{eq.:global_error_vector_w}), and compute its expectation to obtain:
\begin{align}
\E\|\tilde \w_i\|^2_{\Sigma}=\E\big( \|\tilde \w_{i-1}&\|^2_{{\boldsymbol{\Sigma}}'}\big)+\E[\boldsymbol{g}^T_i {\mathcal M} {\mathcal A_2}\Sigma{\mathcal A}^T_2{\mathcal M}\boldsymbol{g}_i]
\label{variance_relation_1}
\end{align}
where $\|x\|^2_{\Sigma}=x^T \Sigma x$ and $\Sigma\geq0$ is  an arbitrary weighting matrix of compatible dimension that we are free to choose. In this expression,
\begin{align}
{\boldsymbol{\Sigma}}'={{\mathcal A_1}(I-{\mathcal M}\boldsymbol{\cal R}_i)^T {\mathcal
A_2}\Sigma{\mathcal A}^T_2(I-{\mathcal M}\boldsymbol{\cal R}_i){\mathcal A}^T_1}
\label{eq.SigmaBold}
\end{align}
Under the temporal and spatial independence conditions on the regression data from Assumption \ref{assm.:regressor assumption}, we can write:
\begin{align}
\E\big(\|\tilde \w_{i-1}&\|^2_{{\boldsymbol{\Sigma}}'}\big)=\E\|\tilde \w_{i-1}\|^2_{\E[{\boldsymbol{\Sigma}}']}
\end{align}
so that (\ref{variance_relation_1}) becomes:
\begin{align}
\E\|\tilde \w_i\|^2_{\Sigma}=\E\|\tilde \w_{i-1}&\|^2_{{\Sigma}'}+\Tr[\Sigma {\mathcal A}^T_2 {\mathcal M}{\mathcal
G}{\mathcal M}{\mathcal A_2}]
\label{variance_relation_2}
\end{align}
where  ${\mathcal G} \triangleq
\E[{\boldsymbol g}_i{\boldsymbol g}^T_i]$ is given by
\begin{align}
{\mathcal G}={\mathcal C}^T \text{diag} \big \{\sigma^2_{v,1}\bar{R}_{u,1},\ldots,\sigma^2_{v,N}{\bar R}_{u,N}\big \}{\mathcal C}
\label{eq.cal-G}
\end{align}
and
\begin{align}
\Sigma'&\triangleq \E [\boldsymbol{\Sigma}']={\cal B}^T\Sigma {\cal B}+O({\cal M}^2)
\approx{\cal B}^T\Sigma {\cal B}
\label{eq.:Sigma'Aprox}
\end{align}
We shall employ (\ref{eq.:Sigma'Aprox}) under the assumption of sufficiently small step-sizes where terms that depend on higher-order powers of the step-sizes are ignored. We next introduce
\be
{\cal Y} \triangleq {\cal A}^T_2{\cal M}{\cal G} {\cal M}{\cal A}_2
\label{eq.:calY}
\ee
and use (\ref{variance_relation_2}) to write:
\begin{align}
\E\|{\tilde \w}_i\|^2_{ \Sigma}=\E\|{\tilde \w}_{i-1}\|^2_{\Sigma'}+ \Tr(\Sigma {\cal Y})
\label{eq.:MSE-stability-analysis-1}
\end{align}
From (\ref{eq.:MSE-stability-analysis-1}), we arrive at
\small
\begin{align}
\E\|{\tilde \w}_i\|_{\Sigma}^2 =&\E\|\tilde{\w}_{-1}\|^2_{({\cal B}^T)^{i+1}\Sigma {\cal B}^{i+1}}+ \sum_{j=0}^{i}\Tr\Big(({\cal B}^T)^j \Sigma {\cal B}^j {\cal Y}\Big)
\label{eq.:MSE-stability-analysis-2}
\end{align}
\normalsize
To prove the convergence and stability of the algorithm in the mean-square sense, we examine the convergence of the terms on the right hand side of (\ref{eq.:MSE-stability-analysis-2}).

In a manner similar to (\ref{eq.:barCalZ2-Mr}), it is shown in Appendix \ref{apex.:mean-square-derivation} that the following property holds:
\be
\bar{\cal Z}_2 {\cal Y}=0,\;\;\;\;\;
{\cal Y}\bar{\cal Z}_2^T=0
\ee
Exploiting this result, we can arrive at  the following statement, which  establishes that relation  (\ref{eq.:MSE-stability-analysis-2}) converges as $i\rightarrow\infty$ and determines its limiting value.
\begin{thm}
\label{lemm.:mean-square-derivation}
Assume the step-sizes are sufficiently small and satisfy (\ref{eq.:step_size_difusion_lms_space_varying}). Assume also that ${\cal B}$ is power convergent. Under these conditions, relation (\ref{eq.:MSE-stability-analysis-2}) converges to
\begin{align}
\lim_{i \rightarrow \infty }\E\|{\tilde \w}_i\|^2_{\Sigma}&=\E\|\tilde{\w}_{-1}\|^2_{({\cal Z}_2\bar{\cal Z}_2)^T \Sigma {\cal Z}_2\bar{\cal Z}_2} \nonumber \\
&\qquad +\big(\vec({\cal Y})\big)^T (I-{\cal F})^{-1} \vec(\Sigma)
\label{eq.:network-steady-state-mean-square}
\end{align}
where
\be
{\cal F}\triangleq \big(({\cal Z}_1\otimes {\cal Z}_1)(J\otimes J)( {\bar{\cal Z}_1}\otimes {\bar{\cal Z}_1})\big)^T
\ee
and factors $\{{\cal Z}_1,\bar{\cal Z}_1,J\}$ are defined in  (\ref{eq.:cal-B-definition}).
\end{thm}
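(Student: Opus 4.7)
My plan is to analyze the two terms on the right of (\ref{eq.:MSE-stability-analysis-2}) separately, using the Jordan splitting (\ref{eq.:cal-B-definition}) together with the annihilation identities $\bar{\cal Z}_2 {\cal Y} = 0$ and ${\cal Y}\bar{\cal Z}_2^T = 0$ that are to be established in Appendix \ref{apex.:mean-square-derivation}. I would introduce the ``stable part'' ${\cal B}_s \triangleq {\cal Z}_1 J \bar{\cal Z}_1$, which by (\ref{eq.:cal-B-definition}) satisfies ${\cal B} = {\cal B}_s + {\cal Z}_2 \bar{\cal Z}_2$, with $\rho({\cal B}_s)<1$ since $J$ has all eigenvalues strictly inside the unit disk. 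The orthogonality relations $\bar{\cal Z}_1{\cal Z}_2=0$, $\bar{\cal Z}_2{\cal Z}_1=0$ and $\bar{\cal Z}_2{\cal Z}_2=I$, coming from ${\cal Z}^{-1}{\cal Z}=I$, yield by a short induction ${\cal B}^j={\cal B}_s^j+{\cal Z}_2\bar{\cal Z}_2$ for $j\geq 1$. For the initial-condition term, (\ref{eq.:Bi-limit}) already gives $\lim_{i\to\infty}{\cal B}^{i+1}={\cal Z}_2\bar{\cal Z}_2$, so the weighting $({\cal B}^T)^{i+1}\Sigma{\cal B}^{i+1}$ converges to $({\cal Z}_2\bar{\cal Z}_2)^T\Sigma({\cal Z}_2\bar{\cal Z}_2)$; passing this bounded limit inside the expectation produces the first summand of (\ref{eq.:network-steady-state-mean-square}).

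For the series term, the decisive step is to use $\bar{\cal Z}_2{\cal Y}=0$ and ${\cal Y}\bar{\cal Z}_2^T=0$ to collapse ${\cal B}^j{\cal Y}({\cal B}^T)^j = {\cal B}_s^j{\cal Y}({\cal B}_s^T)^j$ for every $j\geq 0$. Then cyclic trace together with the vectorization identity $\vec(PXQ)=(Q^T\otimes P)\vec(X)$, and the symmetry of $\Sigma$ and of ${\cal Y}$ (the latter inherited from the symmetry of ${\cal G}$ in (\ref{eq.cal-G})), give
\begin{align*}
\Tr\!\Big(({\cal B}^T)^j\Sigma\,{\cal B}^j\,{\cal Y}\Big) = \vec(\Sigma)^T\big({\cal B}_s\otimes{\cal B}_s\big)^{j}\vec({\cal Y}).
\end{align*}
Applying $(PQR)\otimes(PQR)=(P\otimes P)(Q\otimes Q)(R\otimes R)$ to ${\cal B}_s={\cal Z}_1 J\bar{\cal Z}_1$ identifies $({\cal B}_s\otimes{\cal B}_s)^T$ with the matrix ${\cal F}$ of the statement; since each summand is a scalar I can transpose it to obtain $\vec({\cal Y})^T{\cal F}^j\vec(\Sigma)$. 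Because $\rho({\cal F})=\rho({\cal B}_s)^2<1$, the geometric series $\sum_{j\geq 0}{\cal F}^j$ converges to $(I-{\cal F})^{-1}$, delivering the second summand of (\ref{eq.:network-steady-state-mean-square}).

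The main obstacle is the uniform-in-$j$ collapse from ${\cal B}^j$ to ${\cal B}_s^j$ inside the trace: it is the only place where the algebraic identities on ${\cal Y}$ promised by Appendix \ref{apex.:mean-square-derivation} play an essential role, and without that collapse the singularity of $I-{\cal B}$ in the rank-deficient regime would obstruct the summation of the series (one would otherwise need a generalized-inverse story analogous to the one used for the mean in Theorem \ref{lemm.:mean estimate-general}). Once the collapse is in hand, the remaining ingredients---a single application of (\ref{eq.:Bi-limit}), standard vectorization, and a geometric series whose convergence is guaranteed by $\rho({\cal B}_s)<1$---are routine. The small-step-size assumption plays no further role beyond having justified discarding the ${\cal O}({\cal M}^2)$ terms in (\ref{eq.:Sigma'Aprox}) that underlies (\ref{eq.:MSE-stability-analysis-2}).
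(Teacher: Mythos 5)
Your proposal is correct and follows essentially the same route as the paper's Appendix~\ref{apex.:mean-square-derivation}: split ${\cal B}={\cal Z}_1 J\bar{\cal Z}_1+{\cal Z}_2\bar{\cal Z}_2$, use $\bar{\cal Z}_2{\cal Y}=0$ and ${\cal Y}\bar{\cal Z}_2^T=0$ to reduce each trace term to the stable part ${\cal Z}_1 J^j\bar{\cal Z}_1$, then vectorize and sum the geometric series with $\rho(J\otimes J)<1$, while the initial-condition term follows from (\ref{eq.:Bi-limit}). The only difference is one of modularity: the paper's proof also derives the annihilation identities themselves (from (\ref{eq.:barCalZ2-Mr}), the arbitrariness of the cross-correlation vectors, the factored form (\ref{eq.:calY-v2}) of ${\cal Y}$, and its symmetry), whereas you invoke them as already established.
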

\begin{proof}
See Appendix \ref{apex.:mean-square-derivation}.
\end{proof}
In a manner similar to the proof at the end of Appendix \ref{apex.:mean estimate-general}, the term on the right hand side of (\ref{eq.:network-steady-state-mean-square}) is invariant under basis transformations on the factors $\{{\cal Z}_1,\bar{\cal Z}_1,{\cal Z}_2,\bar{\cal Z}_2\}$. Note that the first term on the right hand side of (\ref{eq.:network-steady-state-mean-square})  is the network penalty due to rank-deficiency. When the node covariance matrices are full rank, then choosing step-sizes according to (\ref{eq.:step_size_difusion_lms_space_varying}) leads to $\rho({\cal B})<1$. When this holds, then  ${\cal B}={\cal Z}_1J \bar{\cal Z}_1$. In this case, the first term on the right hand side of  (\ref{eq.:network-steady-state-mean-square}) will be zero, and ${\cal F}=({\cal B}\otimes{\cal B})^T$. In this case, we obtain:
\begin{align}
\lim_{i \rightarrow \infty }\E\|{\tilde \w}_i\|^2_{\Sigma}=\big(\vec({\cal Y})\big)^T (I-{\cal F})^{-1} \vec(\Sigma)
\label{eq.:network-steady-state-mean-square-full-rank}
\end{align}
which is in agreement with the mean-square analysis of diffusion LMS strategies for regression data with full rank covariance matrices given in \cite{cattivelli2010diffusion,sayed2012diffusion}.
\subsection{Learning Curves}
For each $k$, the MSD and EMSE measures are defined as:
\begin{align}
\eta_k&=\lim_{i \rightarrow \infty } \E\|\tilde \h_{k,i}\|^2=\lim_{i \rightarrow \infty }\E\|\tilde \w_{k,i}\|^2_{B_k^T B_k} \\
\zeta_k&=\lim_{i \rightarrow \infty } \E\|  \u_{k,i}\tilde \h_{k,i-1}\|^2=\lim_{i \rightarrow \infty } \E\|\tilde \w_{k,i-1}\|^2_{{\bar R}_{u,k}}
\end{align}
where $\tilde{\h}_{k,i}=h^o_k- \h_{k,i}$. These parameters can be computed from the network error vector (\ref{eq.:network-steady-state-mean-square}) through proper selection of the weighting matrix $\Sigma$ as follows:
\begin{align}
\eta_k=\lim_{i \rightarrow \infty } \E\|{\tilde \w}_i\|^2_{\Sigma_{\msd_k}},
\quad \zeta_k=\lim_{i \rightarrow \infty } \E\|{\tilde \w}_{i-1}\|^2_{\Sigma_{\emse_k}},
\end{align}
where
\be
\Sigma_{\msd_k}=\diag(e_k)\otimes(B_k^T B_k), \; \Sigma_{\emse_k}=\diag(e_k) \otimes {\bar R}_{u,k}
\label{eq.:Sigma-MSD-EMSE}
\ee
and $\{e_k \}_{k=1}^N$ denote the vectors of a canonical basis set in $N$ dimensional space. The network MSD and EMSE measures are defined as
\begin{align}
\eta_{\net}=\frac{1}{N}\sum_{k=1}^N \eta_k, \qquad \zeta_{\net}=\frac{1}{N}\sum_{k=1}^N\zeta_k
\label{eq.:steady_emse_msd_network}
\end{align}
We can also define MSD and EMSE measures over time as
\begin{eqnarray}
\eta_k(i) &=& \E\|\tilde{\h}_{k,i}\|^2=\E\|\tilde{\w}_i\|^2_{\Sigma_{\msd_k}}\\
\zeta_k(i) &=& \E\|\u_{k,i}\tilde{\h}_{k,i-1}\|^2=
\E\|\tilde{\w}_{i-1}\|^2_{\Sigma_{\emse_k}}
\end{eqnarray}
Using (\ref{eq.:MSE-stability-analysis-2}), it can be verified that these measures evolve according to the following dynamics:
\begin{align}
&\eta_k(i)= \eta_k(i-1)-\|w^o\|_{{\cal H}^i(I- {\cal H}) \sigma_{\msd_k}}+ \alpha^T {\cal H}^i \sigma_{\msd_k}
\label{eq.:transient_msd_nodes}
\end{align}
\begin{align}
&\zeta_k(i)= \zeta_k(i-1)-\|w^o\|_{ {\cal H}^i(I- {\cal H}) \sigma_{\emse_k}}+ \alpha^T {\cal H}^{i} \sigma_{\emse_k}
\label{eq.:transient_emse_nodes}
\end{align}
where
\begin{eqnarray}
{\cal H}&=&({\cal B}\otimes{\cal B})^T \\
\alpha&=&\vec({\cal Y})\\
\sigma_{\msd_k}&=&\vec(\Sigma_{\msd_k})\\
\sigma_{\emse_k}&=&\vec(\Sigma_{\emse_k})
\end{eqnarray}
To obtain (\ref{eq.:transient_msd_nodes}) and (\ref{eq.:transient_emse_nodes}), we set $\E[\w_{k,-1}]=0$ for all $k$.
\section{Computer Experiments}
\label{sec.:results}
In this section, we examine the performance of the diffusion strategy (\ref{eq.diff-step1})-(\ref{eq.diff-step4}) and compare the simulation results with the analytical findings. In addition, we present a simulation example that shows the application of the proposed algorithm in the estimation of space-varying parameters for a physical phenomenon modeled by a PDE system over two spatial dimensions.

\subsection{Performance of the Distributed Solution}
We consider a one-dimensional network topology, illustrated by Fig. \ref{fig.:network-topology-and-parameters}, with $L=1$ and equally spaced nodes along the $x$ direction. We choose $A_1$ as the identity matrix, and compute $A_2$ and $C$ based on the uniform combination and Metropolis rules \cite{cattivelli2010diffusion,sayed2012diffusion}, respectively. We choose $M=2$ and $N_b=5$ and generate the unknown global parameter $w^o$ randomly for each experiment. We obtain $B_k$ using the shifted Chebyshev polynomials given by (\ref{eq.:ShiftedCountinuesChebyshev}) and compute the space varying parameters $h^o_k$ according to (\ref{eq.:local_global_relation}). The measurement data $\d_k(i),\, k \in \{1,2,\cdots,N\}$ are generated using the regression model (\ref{eq.:state_dependent_regression}). The SNR for each node $k$ is computed as
$\text{SNR}_k=\E\| \u_{k,i}{h^o_k}\|^2/\sigma^2_{v,k}$. The noise and the entries of the regression data are white Gaussian and satisfy Assumption \ref{assm.:regressor assumption}. The noise variances, $\{\sigma^2_{v,k}\}$, and the trace of the covariance matrices, $\{\Tr(R_{u,k})\}$, are uniformly distributed between $[0.05,0.1]$ and $[1,5]$, respectively.

Figure \ref{fig.:network-MSD-N=4} illustrates the simulation results for a network with $N=4$ nodes. For this experiment, we set $\mu_k=0.01$ for all $k$ and initialize each node at zero.
In the legend of the figure, we use the subscript $h$ to denote the MSD for ${\tilde \h}_{k,i}$ and the subscript $w$ to refer to the MSD of ${\tilde \w}_{k,i}$.  The simulation curves are obtained by averaging over $300$ independent runs. it can be seen that the simulated and theoretical results match well in all cases. To obtain the analytical results, we use expression (\ref{eq.:network-steady-state-mean-square}) to assess the steady-state values and expression (\ref{eq.:transient_msd_nodes}) to generate the theoretical learning curves.
\begin{figure}
\centering
\subfigure[The network MSD.]{
\includegraphics[width=8cm,height=6cm]{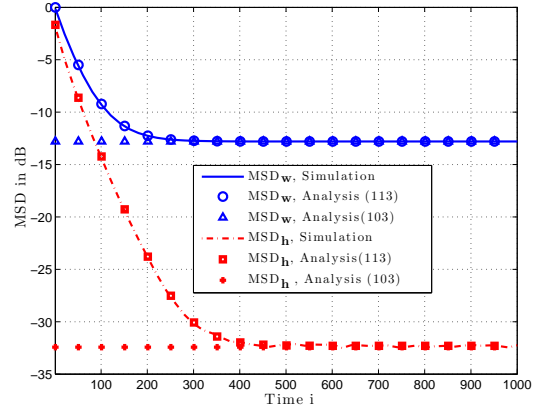}
\label{fig.:MSDTransientN4Nb5Network}
}
\subfigure[The MSD at some individual nodes.]{
\includegraphics[width=8cm,height=6cm]{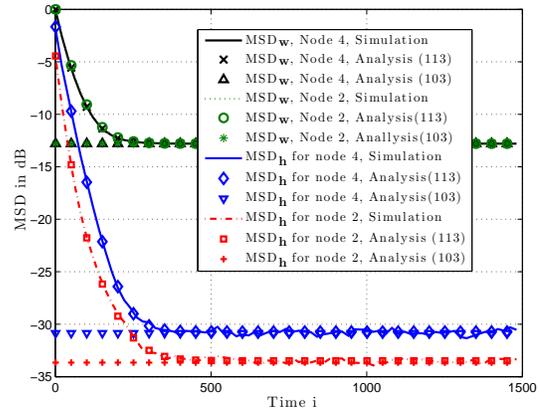}
\label{fig.:MSDTransientN4Nb5}
}
\caption{The network MSD learning curve for $N=4$.}
\label{fig.:network-MSD-N=4}
\end{figure}

Two important points in Fig. \ref{fig.:network-MSD-N=4} need to be highlighted. First, note from the top plot  that the network MSD for ${\tilde \w}_{k,i}$ is larger than that for ${\tilde \h}_{k,i}$. This is because
\be
\E\|\tilde{\h}_{k,i}\|^2=\E\|\tilde{\w}_{k,i}\|^2_{B_k^T B_k}
\label{eq.:msdh-msdw-relation}
\ee
so that the MSD of ${\tilde \h}_{k,i}$ is a weighted version of the MSD of ${\tilde \w}_{k,i}$. In this experiment, the weighting leads to a lower estimation error. Second, note from the bottom plot that while the MSD values of $\tilde{\w}_{k,i}$ are largely independent of the node index, the same is not true for the MSD values of $\tilde{\h}_{k,i}$.
In previous studies on diffusion LMS strategies, it has been shown that, for strongly-connected networks, the network nodes approach a uniform MSD performance level \cite{sayed2013DiffusionMagazine}. The result in Fig. \ref{fig.:MSDTransientN4Nb5} supports this conclusion where it is seen that the MSD of ${\tilde \w}_{k,i}$ for nodes 2 and 4 converge to the same MSD level. However, note that the MSD of ${\tilde \h}_{k,i}$ is different for nodes 2 and 4. This difference in behavior is due to the difference in weighting across nodes from (\ref{eq.:msdh-msdw-relation}).

\subsection{Comparison with Centralized Solution}
We next compare the performance of the diffusion strategy (\ref{eq.diff-step1})-(\ref{eq.diff-step4}) with the centralized solution (\ref{eq.:centralized-step1})--(\ref{eq.:centralized-step2}).
We consider a network with $N=10$ nodes with the topology illustrated by Fig. \ref{fig.:network-topology-and-parameters}. In this experiment, we set $\mu_k=0.02$ for all $k$, while the other network parameters are obtained following the same construction described for Fig. \ref{fig.:network-MSD-N=4}.  As the results in Fig. \ref{fig.:network-MSD-N=10} indicate,  the diffusion and centralized LMS solutions tend to the same MSD performance level in the $w$ domain. This conclusion is consistent with prior studies on the performance of diffusion strategies in the full-rank case over strongly-connected networks \cite{sayed2013DiffusionMagazine}. However, discrepancies in performance are seen between the distributed and centralized implementations in the $h$ domain, and the discrepancy tends to become larger for larger values of $N$.  This is because, in moving from the $w$ domain to the $h$ domain, the inherent aggregation of information that is performed by the centralized solution leads to enhanced estimates for the $h_{k}^o$ variables. For example, if the estimates $\w_{k,i}$ which are generated by the distributed solution are averaged prior to computing the $\h_{k,i}$, then it can be observed that the MSD values of ${\tilde \h}_{k,i}$  for both the centralized and the distributed solution will be similar.

\begin{figure}
\centering
\subfigure[$N_b=5$.]{
\includegraphics[width=7.4cm,height=5.6cm]{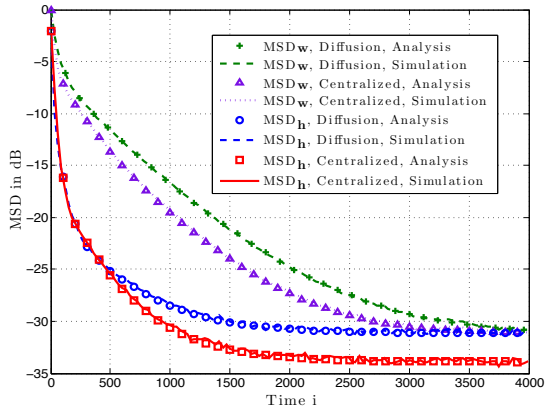}
\label{fig.:MSDTransientN10Nb5}
}
\subfigure[$N_b=10$.]{
\includegraphics[width=8.2cm,height=6.1cm]{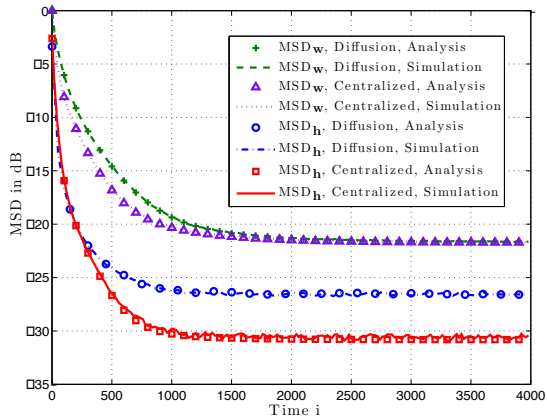}
\label{fig.:MSDTransientN10Nb10}
}
\caption{The network MSD learning curve for $N=10$.}
\label{fig.:network-MSD-N=10}
\end{figure}

In these experiments, we also observe  that if we increase the number of basis functions, $N_b$, then both the centralized and diffusion algorithms will converge faster but their steady-state MSD performance will degrade. Therefore, in choosing the number of basis functions, $N_b$, there is a trade off between convergence speed and MSD performance.
\subsection{Example: Two-Dimensional Process Estimation}
\label{subsec.:Diffusion LMS for Process Estimation}
In this example, we consider a two-dimensional network with $13 \times 13$ nodes that are equally spaced over the unit square $(x,y) \in [0, 1]\times[0, 1]$ with $\Delta x=\Delta y=1/12$ (see Fig. \ref{fig.:2DNetworkTopology}).  This network monitors a physical process $f(x,y)$ described by the Poisson PDE:
\begin{align}
\frac{\partial^2 f(x,y)}{\partial x^2}+\frac{\partial^2 f(x,y)}{\partial y^2}=h(x,y)
\label{eq.:TwoDimensionalPoissonProcess}
\end{align}
where $h(x,y): [0,\, 1]^2 \rightarrow \amsbb{R}$  is an unknown input function. The PDE satisfies the following boundary conditions:
\begin{align}
f(x,0)=f(0,y)=f(x,1)=f(1,y)=0 \nonumber
\end{align}
For this problem, the objective is to estimate $h(x,y)$, given noisy measurements collected by $N=N_x \times N_y=11 \times 11$ nodes corresponding to the {\it interior points} of the network. To discretize the PDE, we employ the finite difference method (FDM) with uniform spacing of $\Delta x$ and $\Delta y$. We define $x_{k_1}\triangleq k_1\Delta x$,\; $y_{k_2}\triangleq k_2 \Delta y$ and introduce the sampled values ${f}_{k_1,k_2}\triangleq f(x_{k_1} ,y_{k_2})$ and $h^o_{k_1,k_2}\triangleq h(x_{k_1},y_{k_2})$. We use the central difference scheme \cite{thomas1995numerical} to approximate the
second order partial derivatives:
\begin{align}
&\frac{{\partial}^2 f(x,y,t)}{\partial x^2}\approx\frac{1}{\Delta x^2}[f_{k_1+1,k_2}-2f_{k_1,k_2}+f_{k_1-1,k_2}]  \\
&\frac{{\partial}^2 f(x,y,t)}{\partial y^2}\approx\frac{1}{\Delta y^2}[f_{k_1,k_2+1}-2f_{k_1,k_2}+f_{k_1,k_2-1}]
\label{eq.:space_deravative}
\end{align}
This leads to the following discretized input function:
\begin{align}
h^o_{k_1,k_2}=&\frac{1}{\Delta x^2}\big(f_{k_1+1,k_2}+f_{k_1,k_2+1}+f_{k_1-1,k_2}\nonumber \\
&\qquad+f_{k_1,k_2-1}-4f_{k_1,k_2}\big)
\label{eq.:TwoDimensionalPoissonProcess_discretized}
\end{align}
For this example, the unknown input process is
\be
h^o_{k_1,k_2}=e^{-\kappa\big((k_1-4)^2+(k_2-4)^2\big )}-5e^{-\kappa\big((k_1-8)^2+(k_2-8)^2\big )}+1
\ee
where $\kappa=(N_x-1)^2/4$.
\begin{figure}
\centering
\subfigure[Network topology.]{
\includegraphics[width=6cm,height=4.5cm]{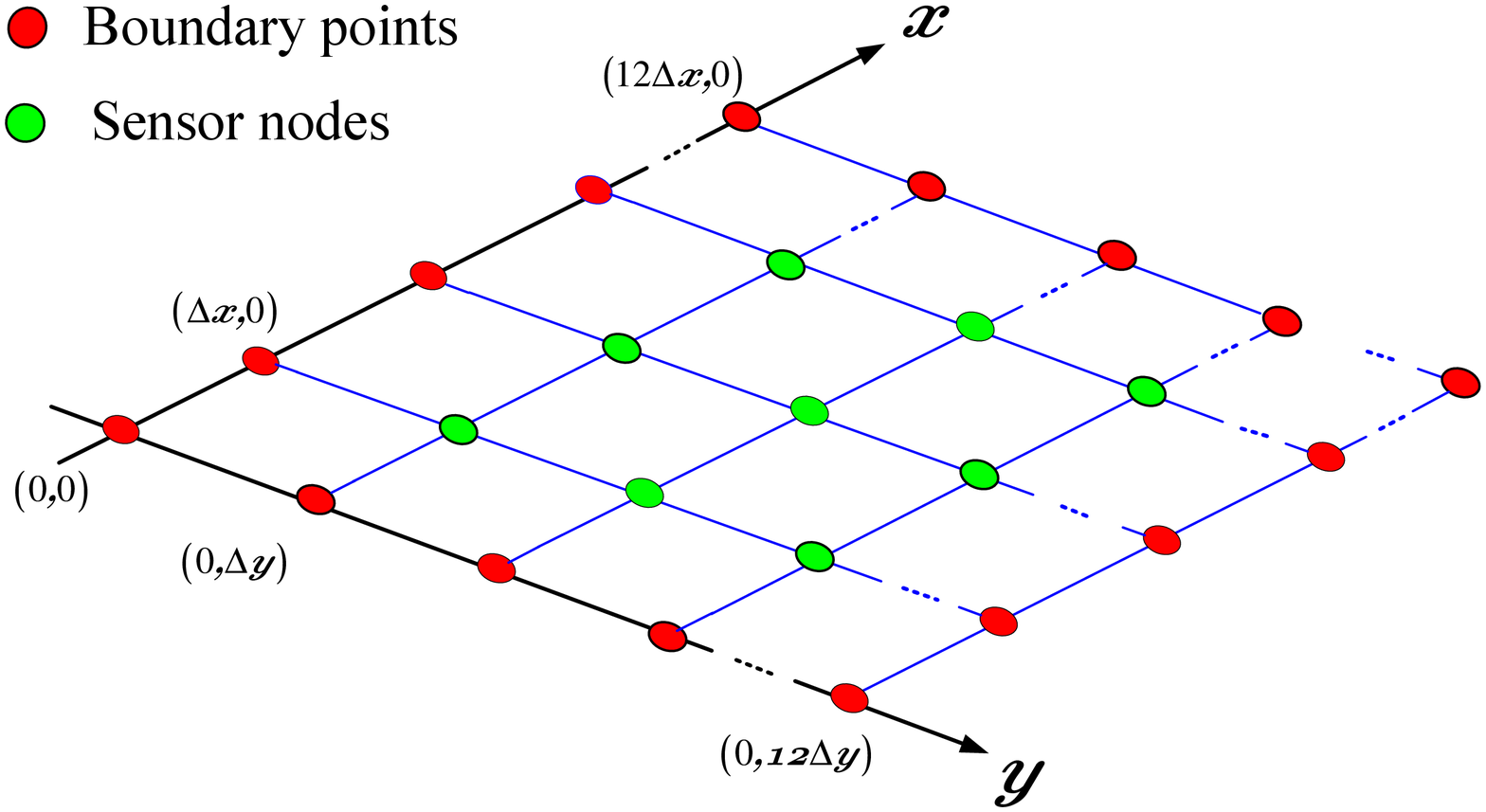}
\label{fig.:2DNetworkTopology}
}
\subfigure[$f_{k_1,k_2}$ over the space.]{
\includegraphics[width=6cm,height=4.5cm]{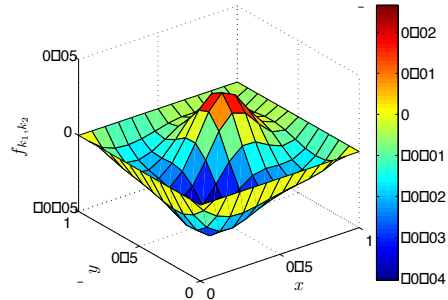}
\label{fig.:f_distribution}
}
\caption{Spatial distribution of $f(x,y)$ over the network grid $\{ (x_{k_1},y_{k_2}) \}$.}
\label{fig.:SourceEstimationExample}
\end{figure}
\begin{figure}
\centering
\includegraphics[width=4.5cm,height=3.4cm]{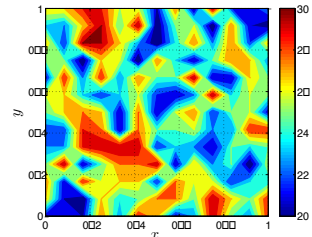}
\caption{Spatial distribution of SNR over the network.}
\label{fig.:networkSNR}
\end{figure}

To obtain $f_{k_1,k_2}$,  we solve (\ref{eq.:TwoDimensionalPoissonProcess})  using the Jacobi over-relaxation method \cite{bertsekas1989parallel}. Figure \ref{fig.:f_distribution} illustrates the values of $f_{k_1,k_2}$ over the spatial domain. For the estimation of $h_{k_1,k_2}$, the given information are the noisy measurement samples $\z_{k_1,k_2}(i)= f_{k_1,k_2}+{\n}_{k_1,k_2}(i)$. In this relation, the noise process ${\n}_{k_1,k_2}(i)$ is zero mean, temporally white and independent over space. For this network, the two dimensional reference signal is the distorted version of $h^o_{k_1,k_2}$ which is represented by $\d_{k_1,k_2}(i)$. The reference signal is obtained from (\ref{eq.:TwoDimensionalPoissonProcess_discretized}) with $f_{k_1,k_2}$ replaced by their noisy measured samples $\z_{k_1,k_2}(i)$, i.e.,
\begin{align}
\d_{k_1,k_2}(i)=&\frac{1}{\Delta x^2} \Big(\z_{k_1+1,k_2}(i)+\z_{k_1,k_2+1}(i)+\z_{k_1-1,k_2}(i) \nonumber \\
&\qquad+\z_{k_1,k_2-1}(i)-4\z_{k_1,k_2}(i)\Big)
\label{eq.:reference-signal2d}
\end{align}
\par \noindent
According to (\ref{eq.:reference-signal2d}), the linear regression model for this problem takes the following form:
\begin{align}
\d_{k_1,k_2}(i)=&\u_{k_1,k_2}(i) h^o_{k_1,k_2}+{\v}_{k_1,k_2}(i)
\label{eq.:2D-linear-model}
\end{align}
where $\u_{k_1,k_2}(i)=1$.
Therefore, in this example, we are led to a linear model (\ref{eq.:2D-linear-model}) with {\em deterministic}  as opposed to random regression data. Although we only studied the case of random regression data in this article, this example is meant to illustrate that the diffusion strategy can still be applied to models involving deterministic data in a manner similar to \cite{cattivelli2008diffusion,cattivelli2011distributed}.

To represent $h^o_{k_1,k_2}$ as a space-invariant parameter vector, we use  two-dimensional shifted Chebyshev basis functions \cite{mukundan2001image}. Using this representation, $h^o_{k_1,k_2}$ can be expressed as:
\begin{align}
h^o_{k_1,k_2}=\sum_{n=1}^{N_b} w^o_{n}\, p_{n,k_1,k_2}
\label{eq.:2d_parameter_interpolation}
\end{align}
where each element of the two-dimensional basis set is:
\begin{align}
p_{n,k_1,k_2}=b_{n_1,k_1}b_{n_2,k_2}
\label{eq.:2d_chebyshev function}
\end{align}
where $\{b_{n_1,k_1}\}$ and $\{b_{n_2,k_2}\}$ are the one-dimensional shifted Chebyshev polynomials in the $x$ and $y$ directions, respectively--recall (\ref{eq.:b-nk}).

In the network, each interior node communicates with its four immediate neighbors. We use $A_1=I$ and compute $C$ and $A_2$ by using the Metropolis and relative degree rules \cite{lopes2008diffusion, cattivelli2010diffusion,sayed2012diffusion}. All nodes are initialized at zero and
$\mu_k=0.01$ for all $k$.
The signal-to-noise ratio (SNR) of the network is uniformly distributed in the range $[20,30]$dB and is shown in Fig. \ref{fig.:networkSNR}.

Figures \ref{fig.:trueSourceValue} and \ref{fig.:EstimatedSourceValue} show three dimensional views of the true and  estimated input process using the proposed diffusion LMS algorithm after $3000$ iterations. Figure \ref{fig.:SourceEstimationMSDPerformance} illustrates the MSD of the estimated source, i.e.,
\mbox{$\lim_{i \rightarrow \infty} \E\|h^o_{k_1,k_2}-\h_{k_1,k_2}(i)\|^2$}.
\begin{figure}
\centering
\subfigure[True parameters.]{
\includegraphics[width=4.1cm,height=3cm]{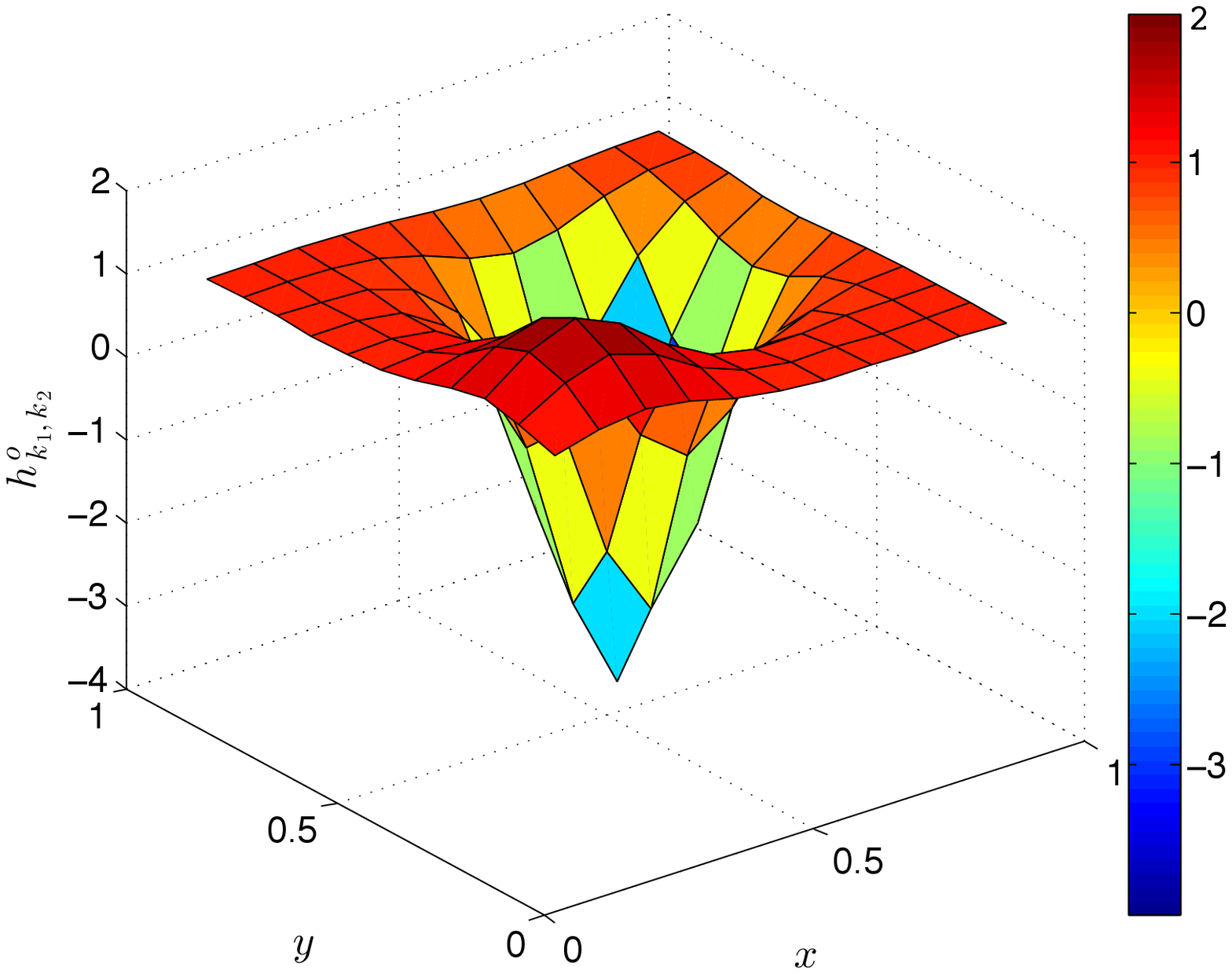}
\label{fig.:trueSourceValue}
}
\subfigure[Estimated parameters.]{
\includegraphics[width=4.1cm,height=3cm]{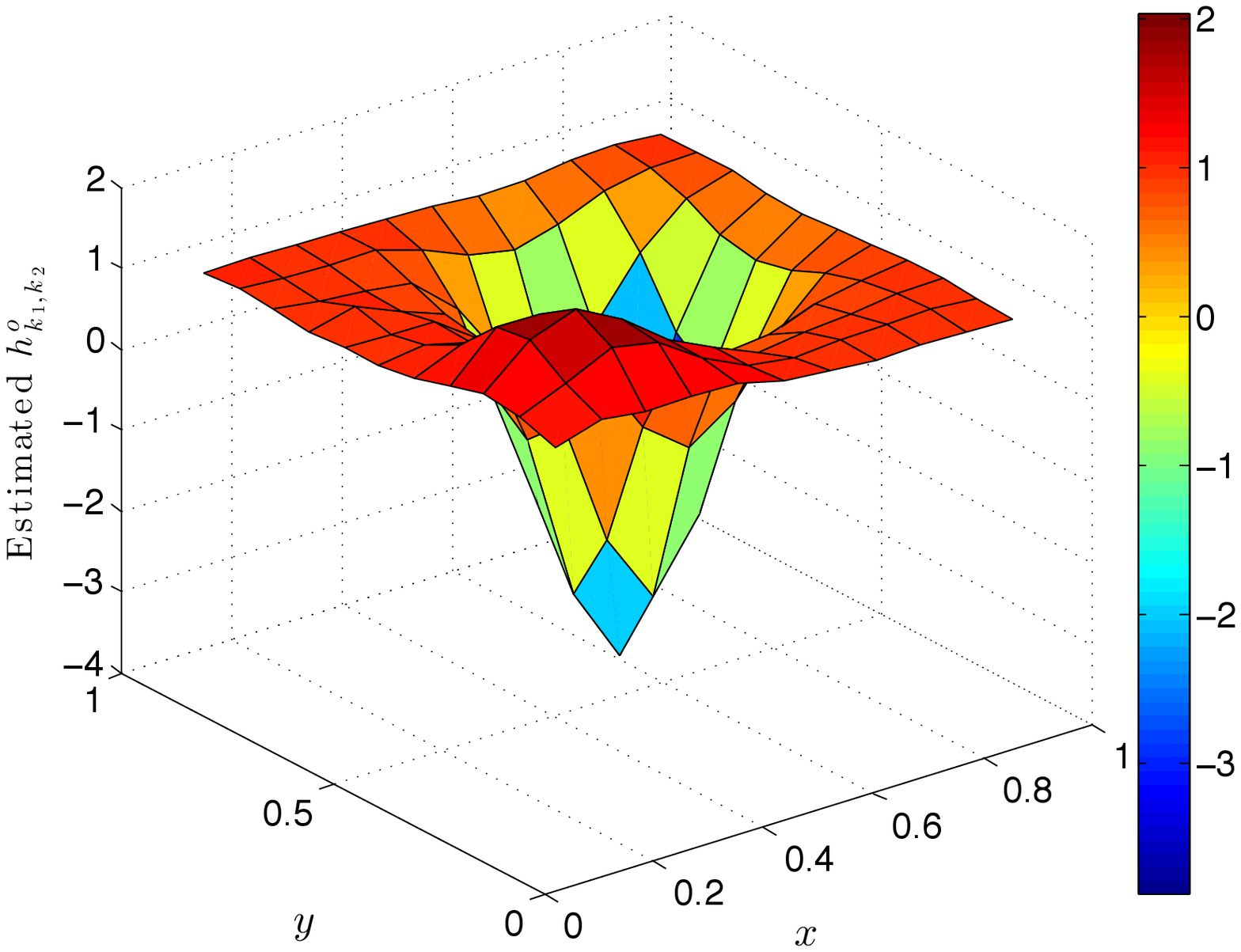}
\label{fig.:EstimatedSourceValue}
}
\caption{True and estimated  $h^o_{k_1,k_2}$ by diffusion LMS.}
\label{fig.:SourceAndEstimatedSourceValue}
\end{figure}
\begin{figure}
\centering
\includegraphics[width=7cm,height=5.5cm]{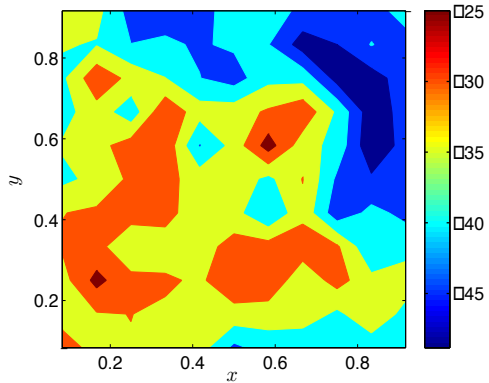}
\caption{Network steady-state MSD performance in dB.}
\label{fig.:SourceEstimationMSDPerformance}
\end{figure}
\section{Conclusion}
\label{sec.:conclusion}
By combining interpolation and distributed adaptive optimization, we proposed a diffusion LMS strategy for estimation and tracking of space-time varying parameters over networks. The proposed algorithm can find the space-varying parameters not only at the node locations but also at spaces where no measurement is collected. We showed that if the network experiences data with rank-deficient covariance matrices, the non-cooperative LMS algorithm will converge to different solutions at different nodes. In contrast, the diffusion LMS algorithm
is able to alleviate the rank-deficiency problem through its use of combination matrices
especially since, as shown by (\ref{eq.:spectral-inequalities}),  $\rho({\cal B})\leq \rho (I-{\cal M} {\cal R})$, where $I-{\cal M} {\cal R}$ is the coefficient matrix that governs the dynamics of the non-cooperative solution.
Nevertheless, if these mechanisms fail to mitigate the deleterious effect of the rank-deficient data, then the algorithm converges to a solution space where the error is bounded. We analyzed the performance of the algorithm in transient and steady-state regimes, and gave conditions under which the algorithm is stable in the mean and mean-square sense.
\begin{appendices}
\section{Mean Error Convergence}
\label{apex.:Mean Convergence Proof}
Based on the rank of ${\cal R}=\diag\{R_1,\cdots,R_N\}$, we have two possible cases:

{a) $R_k>0 \; \forall k \in \{1,\cdots,N\}$}:
As (\ref{eq.:mean_perfomance}) implies, $\E[\tilde  \w_i]$ converges to zero if $\rho({\cal B})<1$. In \cite{sayed2012diffusion}, it was shown that when ${\cal R}>0$, choosing the step-sizes according to (\ref{eq.:step_size_difusion_lms_space_varying}) guarantees $\rho({\cal B})<1$.

{ b) $\exists k \in \{1,\cdots,N\}$ for which $R_k$ is rank-deficient}: For this case, we first show that
\be
\big \|{\cal B}^{i+1} \big \|_{b,\infty}\leq \big \| \big(I-{\cal M}\Lambda)^{i+1} \big \|_{b,\infty}
\label{eq.:meanError-inequality1}
\ee
where $\|\cdot\|_{b,\infty}$ denotes the block-maximum norm for block vectors with block entries of size $MN_b \times 1$ and block matrices with blocks of size $MN_b\times MN_b$. To this end, we note that for the left-stochastic matrices $A_1$ and $A_2$, we have $\|{\cal A}_1^T\|_{b,\infty}=\|{\cal A}_2^T\|_{b,\infty}=1$ \cite{sayed2012diffusion}, and use the sub-multiplicative property of the block maximum norm \cite{zhao2012imperfect} to write:
\begin{align}
 \big \|{\cal B}^{i+1} \big \|_{b,\infty}&\leq \|{\cal A}_2^T\|_{b,\infty} \; \|I-{\cal M}{\cal R}\|_{b,\infty}\; \|{\cal A}_1^T\|_{b,\infty} \times \cdots \nonumber \\
 & \qquad \times \|{\cal A}_2^T\|_{b,\infty} \; \|I-{\cal M}{\cal R}\|_{b,\infty}\; \|{\cal A}_1^T\|_{b,\infty}
\nonumber \\
 &=\big \|I-{\cal M}{\cal R}\big \|^{i+1}_{b,\infty} \label{eq.:meanError-inequality2}
\end{align}
If we introduce the (block) eigendecomposition of ${\cal R}$ (\ref{eq.:EigenDecompositionOfD}) into (\ref{eq.:meanError-inequality2}) and consider the fact that the block-maximum norm is invariant under block-diagonal unitary matrix transformations \cite{sayed2012diffusion, takahashi2010diffusion}, then inequality (\ref{eq.:meanError-inequality2}) takes the form:
\begin{align}
\big \|{\cal B}^{i+1} \big \|_{b,\infty} & \leq \big \|I-{\mathcal M}{\Lambda}\big \|^{i+1}_{b,\infty}
\label{eq.:meanError-inequality3}
\end{align}
Using the property $\|X\|_{b,\infty}=\rho(X)$ for a block diagonal Hermitian matrix $X$ \cite{sayed2012diffusion}, we obtain:
\begin{align}
\big \|(I-{\cal M}{\Lambda})^{i+1} \big \|_{b,\infty}=&\rho\Big((I-{\cal M}\Lambda)^{i+1}\Big)  \nonumber \\
=&\max_{\substack{1 \leq k \leq N \\  1 \leq n \leq M N_b}}  \Big | \big(1-\mu_k \lambda_k(n)\big)^{i+1} \Big | \nonumber \\
=&\Big (\max_{\substack{1 \leq k \leq N \\  1 \leq n \leq M N_b}}  | 1-\mu_k \lambda_k(n)| \Big )^{i+1} \nonumber \\
=&\Big(\rho(I-{\cal M}{\Lambda})\Big)^{i+1} \nonumber \\
=&\big \|I-{\cal M}{\Lambda}\big \|^{i+1}_{b,\infty} \label{eq.:meanError-inequality4}
\end{align}
Using (\ref{eq.:meanError-inequality4}) in (\ref{eq.:meanError-inequality3}), we arrive at (\ref{eq.:meanError-inequality1}). We now proceed to show the boundedness of the mean error for case (b). We iterate (\ref{eq.:mean_perfomance}) to get:
\begin{align}
\E[\tilde  \w_i]={\cal B}^{i+1} \E[\tilde{\w}_{-1}]
\label{eq.:mean_perfomance2}
\end{align}
Applying the block maximum norm to (\ref{eq.:mean_perfomance2}) and using inequality (\ref{eq.:meanError-inequality1}), we obtain:
\begin{align}
\lim_{i \rightarrow \infty} \big \| \E[\tilde{\w}_i] \big \|_{b,\infty}&\leq \lim_{i \rightarrow \infty} \big \| (I-{\cal M} \Lambda)^{i+1} \big \|_{b,\infty} \; \big \|\E[\tilde{\w}_{-1}]\big \|_{b,\infty}
\label{eq.:mean_perfomance3}
\end{align}
The value of $\lim_{i\rightarrow \infty} \|(I-{\mathcal M}{\Lambda})^{i+1}\|_{b,\infty}$ can be computed by evaluating the limits of its diagonal entries. Considering the step-sizes as in (\ref{eq.:step_size_difusion_lms_space_varying}), the diagonal entries are computed as:
\be
\lim_{i\rightarrow \infty}\big(1-\mu_k \lambda_k(n)\big)^{i+1}=\left \{\begin{array}{l l}
1, \,\,\,& \text{if}\, \lambda_k(n)=0\\
0, \,\,\,& \text{otherwise}
\end{array}
\right.
\ee
Therefore, (\ref{eq.:mean_perfomance3}) reads as:
\begin{align}
\lim_{i \rightarrow \infty}\big \|\E[{\tilde  \w}_i]\big\|_{b,\infty}&\leq \|I-\Ind(\Lambda)\|_{b,\infty}
\; \big \|\E[\tilde{\w}_{-1}]\big \|_{b,\infty}
\label{eq.:mean_perfomanc6}
\end{align}

\section{Mean Behavior When ($A_1=A_2=I$)}
\label{apex.:error-bound-A1A2I}
Setting $A_1=A_2=I$ in the diffusion recursions (\ref{eq.diff-step1})-(\ref{eq.diff-step3}) and  subtracting $w^o$ from both sides of (\ref{eq.diff-step2}), we get:
\begin{equation}
\tilde  \w_{k,i}=\tilde  \w_{k,i-1}-\mu_k\sum_{\ell \in {\cal N}_k}c_{\ell,k} B^T_{\ell} \u^T_{\ell,i}(\d_{\ell}(i)-\u_{\ell,i}B_{\ell}  \w_{k,i-1})
\end{equation}
Under Assumption \ref{assm.:regressor assumption} and using $\d_{\ell}(i)=\u_{\ell,i}B_{\ell}w^o+\v_{\ell}(i)$, we obtain:
\begin{align}
\E[\tilde  \w_{k,i}]&=Q_k[I-\mu_k \Lambda_k]Q^T_k \, \E[\tilde  \w_{k,i-1}]
\end{align}
We define $ \p_{k,i} \triangleq {Q}_k^T \tilde \w_{k,i}$ and start from some initial condition to arrive at
\begin{align}
\E[\p_{k,i}]=[I-\mu_k \Lambda_k]\E[ \p_{k,i-1}]=[I-\mu_k \Lambda_k]^{i+1} \E[\p_{k,-1}] \nonumber
\end{align}
If we choose the step-sizes according to (\ref{eq.:step_size_difusion_lms_space_varying}) then we get:
\begin{align}
\lim_{i\rightarrow \infty} \E[ \p_{k,i}]=\big[I-\Ind(\Lambda_k)\big] \E [\p_{k,-1}]
\label{eq.:lms_rank_defficient_weight_error_vector}
\end{align}
Equivalently, this can be written as:
\begin{align}
\lim_{i\rightarrow \infty} \E[ {\tilde \w}_{k,i}]=Q_k \big[I-\Ind(\Lambda_k)\big] Q^T_k \, \E[{\tilde \w}_{k,-1}]
\label{eq.:lms_rank_defficient_weight_error_vector2}
\end{align}
This result indicates that the  mean error does not grow unbounded. Now from (\ref{eq.:NodeNormalEquation}), we can verify that:
\begin{equation}
Q_k \Ind(\Lambda_k) Q_k^T w^o = R_k^\dag r_k
\end{equation}
Then, upon substitution of $\tilde{\w}_{k,i}=w^o-\w_{k,i}$ into (\ref{eq.:lms_rank_defficient_weight_error_vector2}), we obtain:
\begin{align}
\lim_{i \rightarrow \infty} \E[\w_{k,i}]
&=\small Q_k \Ind(\Lambda_k) Q_k^T w^o+Q_k [I-\Ind(\Lambda_k)]Q_k^T \E[\w_{k,-1}] \normalsize \nonumber \\
&= R_k^{\dagger}r_k+\sum_{n=L_k+1}^{MN_b} q_{k,n} q_{k,n}^T \E[\w_{k,-1}]
\end{align}
\section{Proof of Lemma \ref{lemm.:mean estimate-general}}
\label{apex.:mean estimate-general}
From (\ref{eq.:Bi-limit}), we readily deduce that
\begin{align}
\lim_{i \rightarrow \infty} {\cal B}^{i+1}\E[\w_{-1}]=({\cal Z}_2{\bar{\cal Z}}_2)\, \E[\w_{-1}]
\label{eq.:limit-first-term}
\end{align}
On the other hand, from (\ref{eq.:cal-B-definition}), we have
\be
\lim_{i \rightarrow \infty} \sum_{j=0}^i {\cal B}^{j} {\cal A}^T_2{\cal M}r= \lim_{i \rightarrow \infty}\sum_{j=0}^i \big({\cal Z}_1 J^{j} \bar{\cal Z}_1+{\cal Z}_2 \bar{\cal Z}_2\big) {\cal A}^T_2{\cal M}r
\label{eq.:limit-second-term-v1}
\ee
Using (\ref{eq.:barCalZ2-Mr}), the term involving $\bar{\cal Z}_2$ cancels out and the above reduces to
\begin{align}
\lim_{i \rightarrow \infty}\sum_{j=0}^i {\cal B}^{j}{ \cal A}^T_2{\mathcal M}r&=\lim_{i \rightarrow \infty}\sum_{j=0}^i \big({\cal Z}_1 J^{j}{\bar{\cal Z}_1}\big)
{ \cal A}^T_2{\mathcal M}r \nonumber \\
&={\cal Z}_1(I-J)^{-1}{\bar{\cal Z}_1}{ \cal A}^T_2{\mathcal M}r
\label{eq.:limit-second-term-v1}
\end{align}
since $\rho(J)<1$. We now verify that the matrix
\be
X^{-}={\cal Z}_1(I-J)^{-1}{\bar{\cal Z}_1}
\ee
is a (reflexive) generalized inverse for the matrix $X=(I-{\cal B})$. Recall that a (reflexive) generalized inverse for a matrix $Y$ is any matrix $Y^{-}$ that satisfies the two conditions \cite{ben2003generalized}:
\begin{align}
YY^{-}Y&=Y
\label{eq.:generalized-inverse-condition-1}\\
Y^{-}YY^{-}&=Y^{-}
\label{eq.:generalized-inverse-condition-2}
\end{align}
To verify these conditions, we first note from ${\cal Z}{\cal Z}^{-1}=I$ and ${\cal Z}^{-1}{\cal Z}=I$ in (\ref{eq.:cal-B-definition}) that the following relations hold:
\begin{align}
&{\cal Z}_1{\bar {\cal Z}}_1+{\cal Z}_2{\bar {\cal Z}}_2=I
\label{eq.:Z-Prop-1}\\
&{\bar {\cal Z}}_1{\cal Z}_2=0 \label{eq.:Z-Prop-2} \\
&{\bar {\cal Z}}_2{\cal Z}_1=0 \label{eq.:Z-Prop-3}\\
&{\bar {\cal Z}}_1{\cal Z}_1=I  \label{eq.:Z-Prop-4}\\
&{\bar {\cal Z}}_2{\cal Z}_2=I \label{eq.:Z-Prop-5}
\end{align}
We further note that $X$ can be expressed as:
\be
X=(I-{\cal B})={\cal Z}_1(I-J){\bar{\cal Z}_1}
\ee
It is then easy to verify that the matrices $\{X,X^{-}\}$ satisfy conditions (\ref{eq.:generalized-inverse-condition-1}) and (\ref{eq.:generalized-inverse-condition-2}), as claimed.
Therefore, (\ref{eq.:limit-second-term-v1}) can be expressed as:
\be
\lim_{i \rightarrow \infty}\sum_{j=0}^i {\cal B}^{j}{ \cal A}^T_2{\mathcal M}r=(I-{\cal B})^{-}
{ \cal A}^T_2{\mathcal M}r
\label{eq.:limit-second-term-v2}
\ee
Substituting (\ref{eq.:limit-first-term}) and  (\ref{eq.:limit-second-term-v2}) into (\ref{eq.:w_iteratedSolutionRank-deficient})  leads to (\ref{eq.:lim-bomegai2}).

Let us now verify that the right-hand side of (\ref{eq.:lim-bomegai2}) remains invariant under basis transformations for the Jordan factors $\{{\cal Z}_1,\bar{\cal Z}_1,{\cal Z}_2,\bar{\cal Z}_2\}$.
To begin with, the Jordan decomposition (\ref{eq.:cal-B-definition}) is not unique. Let us assume, however, that we fix the central term $\mbox{\rm diag}\{J,I\}$ to remain invariant and allow the Jordan factors  $\{{\cal Z}_1,\bar{\cal Z}_1,{\cal Z}_2,\bar{\cal Z}_2\}$ to vary. It follows from (\ref{eq.:cal-B-definition}) that
\be
\bar{\cal Z}_2{\cal B}=\bar{\cal Z}_2,\;\;\;\;{\cal B}{\cal Z}_2={\cal Z}_2
\ee
so that the columns of ${\cal Z}_2$ and the rows of $\bar{\cal Z}_2$ correspond to right and left-eigenvectors of ${\cal B}$, respectively, associated with the eigenvalues with value one. If we replace ${\cal Z}_2$ by any transformation of the form ${\cal Z}_2 {\cal X}_2$, where ${\cal X}_2$ is invertible, then by (\ref{eq.:Z-Prop-5}), $\bar{\cal Z}_2$ should be replaced by ${\cal X}_2^{-1}\bar{\cal Z}_2$. This conclusion can also be seen as follows. The new factor ${\cal Z}$ is given by

\small
\be
{\cal Z} \triangleq
\left [
\begin{array}{cc} {\cal Z}_1 & {\cal Z}_2{\cal X}_2
\end{array}
\right ]
=  \left [
\begin{array}{cc} {\cal Z}_1 & {\cal Z}_2
\end{array}
\right]
\left [
\begin{array}{cc}
I & 0\\ 0 & {\cal X}_2
\end{array}
\right]
\ee
\normalsize
and, hence, the new ${\cal Z}^{-1}$ becomes
\small
\be
{\cal Z}^{-1}= \left [
\begin{array}{c}
\bar{\cal Z}_1 \\ {\cal X}^{-1}_2 \bar{\cal Z}_2
\end{array}
\right ]
\ee
\normalsize
which confirms that $\bar{\cal Z}_2$ is replaced by ${\cal X}^{-1}_2\bar{\cal Z}_2$. It follows that the product ${\cal Z}_2\bar{\cal Z}_2$ remains invariant under arbitrary invertible transformations ${\cal X}_2$. Moreover, from (\ref{eq.:cal-B-definition}) we also have that
\be
\bar{\cal Z}_1{\cal B}=J\bar{\cal Z}_1,\;\;\;\;{\cal B}{\cal Z}_1={\cal Z}_1 J
\ee
Assume we replace ${\cal Z}_1$ by any transformation of the form ${\cal Z}_1 {\cal X}_1$, where ${\cal X}_1$ is invertible, then by (\ref{eq.:Z-Prop-4}), $\bar{\cal Z}_1$ should be replaced by ${\cal X}_1^{-1}\bar{\cal Z}_1$. However, since we want to maintain $J$ invariant, then this implies that the transformation ${\cal X}_1$ must also satisfy
\be
{\cal X}_1^{-1} J {\cal X}_1 = J
\ee
It follows that the product ${\cal Z}_1 (I-J)^{-1} \bar{\cal Z}_1$ remains invariant under such invertible transformations ${\cal X}_1$, since
\begin{eqnarray}
{\cal Z}_1 (I-J)^{-1} \bar{\cal Z}_1 &=&
{\cal Z}_1 {\cal X}_1{\cal X}_1^{-1}(I-J)^{-1}{\cal X}_1{\cal X}_1^{-1} \bar{\cal Z}_1\nonumber\\
&=&{\cal Z}_1 {\cal X}_1(I-{\cal X}_1^{-1} J {\cal X}_1)^{-1}{\cal X}_1^{-1} \bar{\cal Z}_1\nonumber\\
&=&{\cal Z}_1 {\cal X}_1(I-J)^{-1} {\cal X}_1^{-1}\bar{\cal Z}_1
\end{eqnarray}

\section{PROOF OF LEMMA \ref{lemm.:mean-square-derivation}}
\label{apex.:mean-square-derivation}
We first establish that $\bar{\cal Z}_2{\cal Y}$ and ${\cal Y}\bar{\cal Z}^T_2$ are both equal to zero. Indeed, we start by replacing $r$ in (\ref{eq.:barCalZ2-Mr}) by its expression from (\ref{eq.:Rdcal_rank_defficientDiff}) and (\ref{eq.:localr}) as
$r={\cal C}^T \col\{\bar{r}_{du,1},\cdots,\bar{r}_{du,N}\}$ that leads to:
\be
\bar{\cal Z}_2{\cal A}_2^T {\cal M} {\cal C}^T \col\{\bar{r}_{du,1},\cdots,\bar{r}_{du,N}\} = 0
\ee
By further replacing $\bar{r}_{du,k}$ by their values from (\ref{eq.:bar_Ru}), we obtain:
\be
\bar{\cal Z}_2 {\cal A}_2^T {\cal M} {\cal C}^T \diag\{B_1^T,\cdots,B_N^T\} \col\{r_{du,1}, \cdots, r_{du,N}\}= 0
\label{eq.orthogonality-Z2mr}
\ee
This relation must hold regardless of the cross-correlation vectors $\{r_{du,k}\}$. Therefore,
\be
\bar{\cal Z}_2 {\cal A}_2^T {\cal M} {\cal C}^T \diag\{B_1^T,\cdots,B_N^T\}=0
\label{eq.orthogonality-Z2mr2}
\ee
We now define
\be
{\cal V}=\diag\{\sigma^2_{v,1}I_{MN_b},\cdots,\sigma^2_{v,N}I_{MN_b}\}
\ee
and rewrite expression (\ref{eq.:calY}) as
\begin{align}
{\cal Y}&= {\cal A}^T_2{\cal M} {\cal C}^T \diag\{B_1^T,\cdots,B_N^T\} \diag \{R_{u,1},\cdots,R_{u,N}\}\nonumber \\
&\qquad \times \diag\{B_1,\cdots,B_N\}\, {\cal V}\, {\cal C}{\cal M}{\cal A}_2
\label{eq.:calY-v2}
\end{align}
Multiplying this from the left by $\bar{\cal Z}_2$ and comparing the result with (\ref{eq.orthogonality-Z2mr2}), we conclude that
\be
\bar{\cal Z}_2{\cal Y} =0
\label{eq.orthogonality-Z2Y1}
\ee
Noting that ${\cal Y}$ is symmetric, we then obtain:
\be
{\cal Y} \bar{\cal Z}^T_2=0
\label{eq.orthogonality-Z2Y2}
\ee
Returning to recursion (\ref{eq.:MSE-stability-analysis-2}), we note first from (\ref{eq.:cal-B-definition}) that ${\cal B}$ can be rewritten as
\be
{\cal B}={\cal Z}_1 J {\bar{\cal Z}_1}+{\cal Z}_2{\bar{\cal Z}}_2
\label{eq.:cal-B-definition-2}
\ee
Since  ${\cal B}$ is power convergent, the first term on the right hand side of (\ref{eq.:MSE-stability-analysis-2}) converges to
\begin{align}
\lim_{i \rightarrow \infty}\E\|\tilde{\w}_{-1}&\|^2_{({\cal B}^T)^{i+1} \Sigma {\cal B}^{i+1}}=\E\|{\tilde \w}_{-1}\|^2_{({\cal Z}_2\bar{\cal Z}_2)^T \Sigma {\cal Z}_2\bar{\cal Z}_2}
\label{eq.:addtional-zero-initial-v2}
\end{align}
Substituting (\ref{eq.:cal-B-definition-2}) into the second term on the right hand side of (\ref{eq.:MSE-stability-analysis-2}) and using (\ref{eq.orthogonality-Z2Y1}) and (\ref{eq.orthogonality-Z2Y2}), we arrive at
\begin{align}
\lim_{i\rightarrow \infty}\sum_{j=0}^{i}\Tr\Big(({\cal B}^T)^j \Sigma {\cal B}^j {\cal Y}\Big)
=&\Tr\Big(\lim_{i\rightarrow \infty} \sum_{j=0}^{i}({\cal Z}_1 J^j {\bar{\cal Z}_1})^T \nonumber \\
&\times\Sigma ({\cal Z}_1 J^j {\bar{\cal Z}_1}){\cal Y}\Big)
\label{eq:mse-second-term-limit-v2}
\end{align}
If matrices $X_1$, $X_2$ and $\Sigma$ are of compatible dimensions, then the following relations hold\cite{sayed2012diffusion}:
\begin{align}
\Tr(X_1X_2)=\big(\vec (X_2^T)\big)^T \vec(X_1) \\
\vec(X_1\Sigma X_2)=(X_2^T\otimes X_1)\vec(\Sigma)
\end{align}
Using these relations in (\ref{eq:mse-second-term-limit-v2}), we obtain
\begin{align}
\Tr\Big(&\lim_{i\rightarrow \infty}\sum_{j=0}^{i}({\cal B}^T)^j \Sigma {\cal B}^j {\cal Y}\Big)=\Big(\vec({\cal Y}^T)\Big)^T \nonumber \\
&\times \Big(\lim_{i \rightarrow \infty} \sum_{j=0}^{i}({\cal Z}_1 J^j {\bar{\cal Z}_1})^T \otimes ({\cal Z}_1 J^j {\bar{\cal Z}_1})^T \Big)\vec(\Sigma)
\label{eq:mse-second-term-limit-v3}
\end{align}
This is equivalent to:
\begin{align}
\Tr\Big(&\sum_{j=0}^{\infty}({\cal B}^T)^j \Sigma {\cal B}^j {\cal Y}\Big)=\big(\vec({\cal Y})\big)^T \Big(\sum_{j=0}^{\infty}{\cal F}^j\Big)\vec(\Sigma)
\label{eq:mse-second-term-limit-v4}
\end{align}
where
\be
{\cal F}=\Big(({\cal Z}_1\otimes {\cal Z}_1)(J\otimes J)( {\bar{\cal Z}_1}\otimes {\bar{\cal Z}_1})\Big)^T
\ee
Since $\rho(J\otimes J)<1$, the series converges and we obtain:
\begin{align}
\Tr\Big(\lim_{i\rightarrow \infty}&\sum_{j=0}^{i}({\cal B}^T)^j \Sigma {\cal B}^j {\cal Y}\Big)=\Big(\vec({\cal Y})\Big)^T(I-{\cal F})^{-1}\vec(\Sigma)
\label{eq:mse-second-term-limit-v5}
\end{align}
Upon substitution of (\ref{eq.:addtional-zero-initial-v2}) and (\ref{eq:mse-second-term-limit-v5}) into (\ref{eq.:MSE-stability-analysis-2}), we arrive at (\ref{eq.:network-steady-state-mean-square}).
\end{appendices}

\vspace{-2cm}
\begin{IEEEbiography}[{\includegraphics[width=1in,height=1.25in,clip,keepaspectratio]{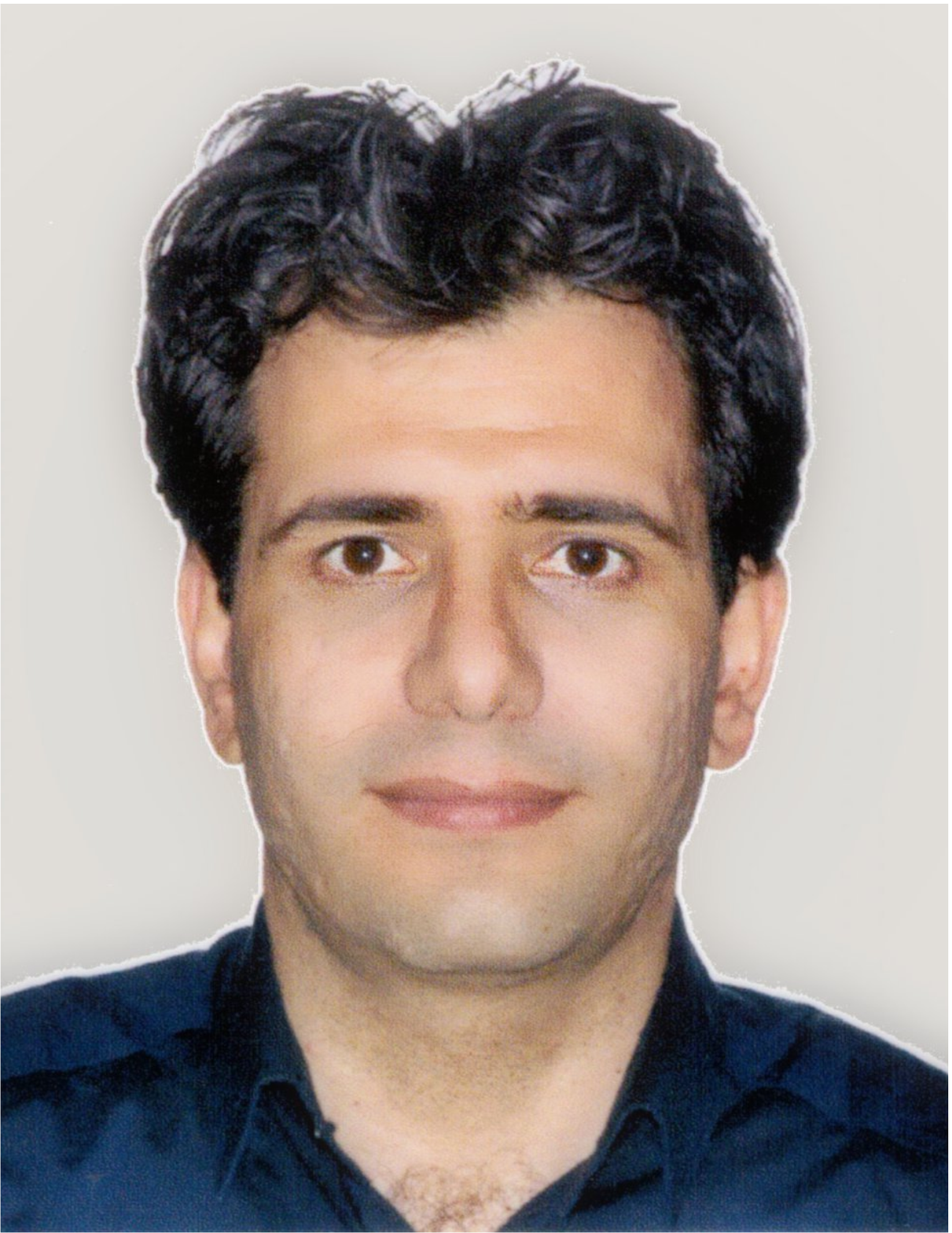}}]{Reza Abdolee} is currently a Ph.D. candidate at the Department of Electrical and Computer Engineering, McGill University, Montreal, Canada. In 2012, he was a research scholar at the Bell Labs, Alcatel-Lucent, Stuttgart, Germany. In 2011, he was a visiting Ph.D. student at the
Department of Electrical Engineering, University of California, Los Angeles (UCLA).  From 2006 to 2008, Mr. Abdolee worked as
a staff engineer at the Wireless Communication Center, University of Technology, Malaysia (UTM), where he implemented a switch-beam smart antenna system for wireless network applications. His research interests include communication theory, statistical signal processing, optimization, and hardware design and integration.  Mr. Abdolee was a recipient of several awards and scholarships, including, NSERC Postgraduate Scholarship, FQRNT Doctoral Research Scholarship, McGill Graduate Research Mobility Award, DAAD-RISE International Internship scholarship (Germany), FQRNT International Internship Scholarship, McGill Graduate Funding and Travel award, McGill International Doctoral Award, ReSMiQ International Doctoral Scholarship, Graduate Student Support Award (Concordia University), and International Tuition Fee Remission Award (Concordia University).
\end{IEEEbiography}
\begin{IEEEbiography}[{\includegraphics[width=1in,height=1.25in,clip,keepaspectratio]{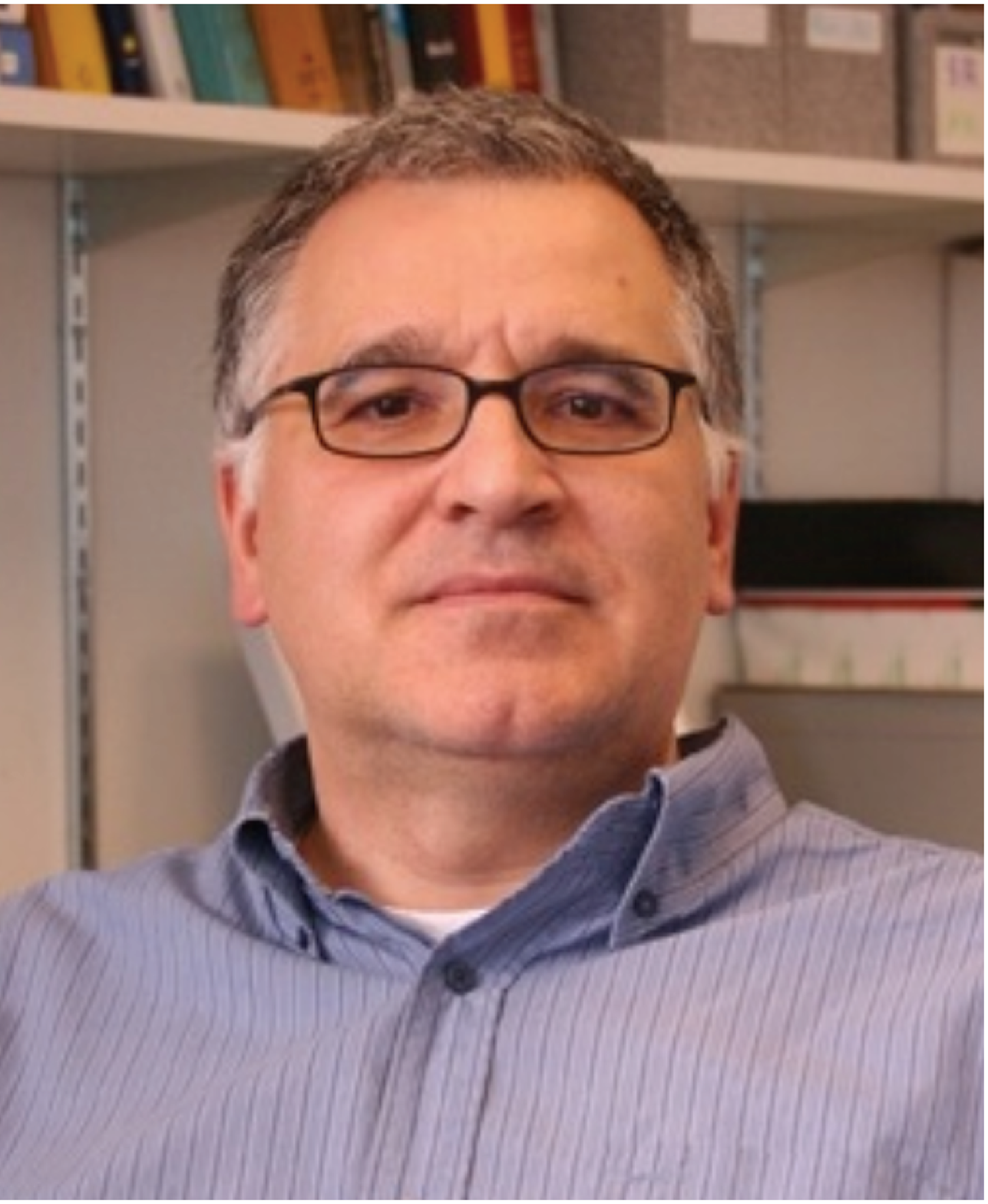}}]{Benoit Champagne} received the B.Ing. degree in Engineering Physics from the Ecole Polytechnique de Montréal in 1983, the M.Sc. degree in Physics from the Université de Montréal in 1985, and the Ph.D. degree in Electrical Engineering from the University of Toronto in 1990. From 1990 to 1999, he was an Assistant and then Associate Professor at INRS-Telecommunications, Université du Quebec, Montréal. In1999, he joined McGill University, Montreal, where he is now a Full Professor within the Department of Electrical and Computer Engineering. He also served as Associate Chairman of Graduate Studies in the Department from 2004 to 2007.

His research focuses on the development and performance analysis of advanced algorithms for the processing of information bearing signals by digital means. His interests span many areas of statistical signal processing, including detection and estimation, sensor array processing, adaptive filtering, and applications thereof to broadband communications and speech processing, where he has published extensively. His research has been funded by the Natural Sciences and Engineering Research Council (NSERC) of Canada, the Fonds de Recherche sur la Nature et les Technologies from the Government of Quebec, Prompt Quebec, as well as some major industrial sponsors, including Nortel Networks, Bell Canada, InterDigital and Microsemi.

He has been an Associate Editor for the IEEE Signal Processing Letters, the IEEE Trans. on Signal Processing and the EURASIP Journal on Applied Signal Processing. He has also served on the Technical Committees of several international conferences in the fields of communications and signal processing. He is currently a Senior Member of IEEE.
\end{IEEEbiography}
\vspace{-6cm}
\begin{IEEEbiography}[{\includegraphics[width=1in,height=1.25in,clip,keepaspectratio]{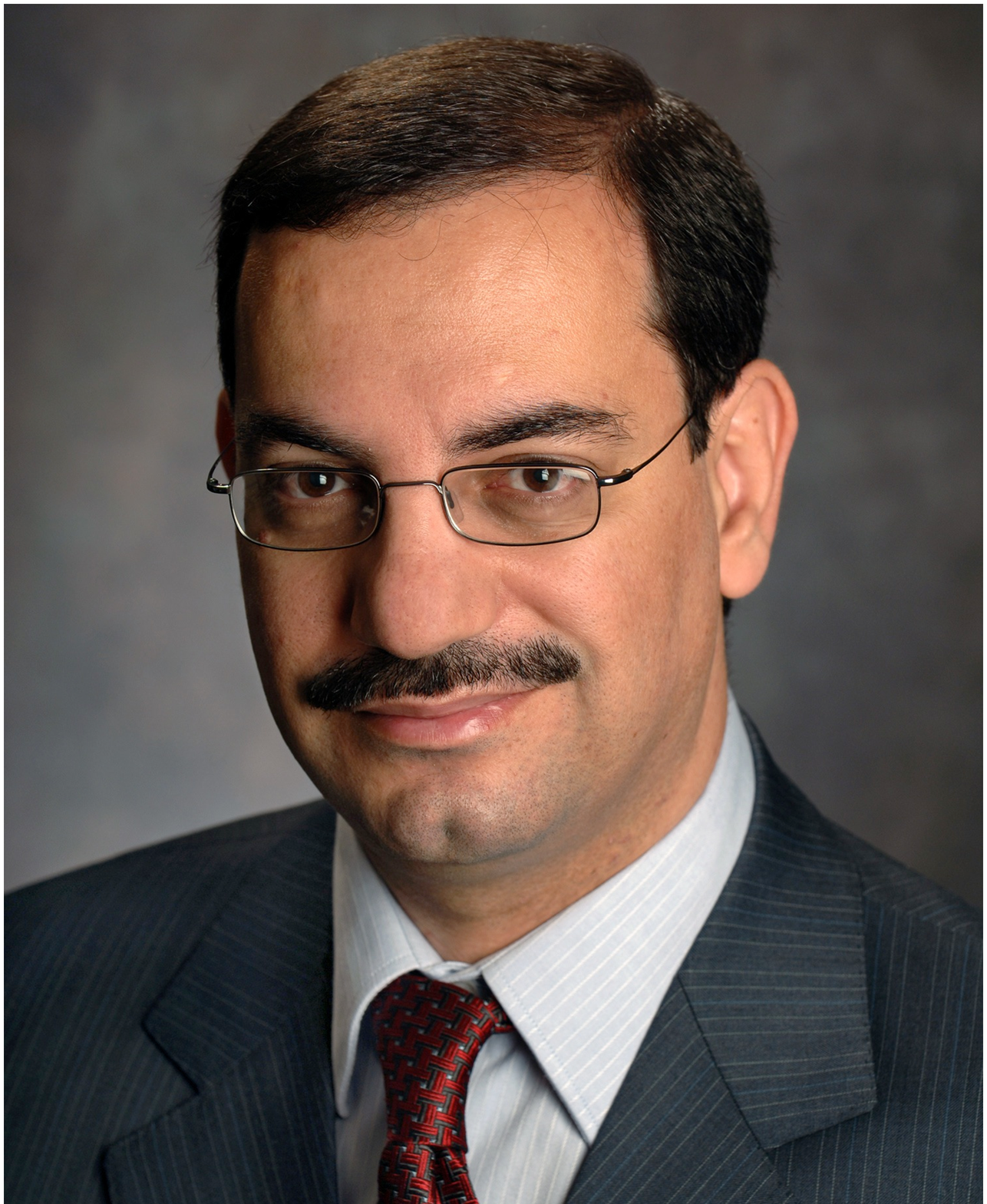}}]{Ali H. Sayed} is professor and former chairman of electrical engineering at the University of California, Los Angeles, where he directs the UCLA Adaptive Systems Laboratory. An author of over 430 scholarly publications and five books, his research involves several areas including adaptation and learning, information processing theories, statistical signal processing, network science, and biologically-inspired designs. His work has been recognized with several awards including the 2012 Technical Achievement Award from the IEEE Signal Processing Society, the 2005 Terman Award from the American Society for Engineering Education, a 2005 Distinguished Lecturer from the IEEE Signal Processing Society, the 2003 Kuwait Prize, and the 1996 IEEE Donald G. Fink Prize. He has also been awarded several paper awards from the IEEE Signal Processing Society (2002,2005,2012) and is a Fellow of both the IEEE and the American Association for the Advancement of Science (AAAS). He has been active in serving the Signal Processing community in various roles. Among other activities, he served as Editor-in-Chief of the IEEE Transactions on Signal Processing (2003-2005), Editor-in-Chief of the EURASIP J. on Advances in Signal Processing (2006-2007), General Chairman of ICASSP (Las, Vegas, 2008), and Vice-President of Publications of the IEEE Signal Processing Society (2009-2011). He also served as member of the Board of Governors (2007-2011), Awards Board (2005), Publications Board (2003-2005, 2009-2011), Conference Board (2007-2011), and Technical Directions Board (2008-2009) of the same Society and as
General Chairman of the IEEE International Conference on Acoustics, Speech,
and Signal Processing (ICASSP) 2008.
\end{IEEEbiography}
\end{document}